\documentclass{article}
\usepackage{graphicx} 
\usepackage[utf8]{inputenc}
\usepackage{pgf,tikz,pgfplots}
\pgfplotsset{compat=1.17}
\usetikzlibrary{arrows}
\usetikzlibrary{fit}
\usepackage{amsthm}
\usepackage{array}
\usepackage{tabularx}
\newcolumntype{C}[1]{>{\centering\arraybackslash}p{#1}}
\newcolumntype{Y}{>{\raggedright\arraybackslash}X}
\usepackage{cancel}
\usepackage{bm}
\usepackage{breqn}
\usepackage[utf8]{inputenc}
\usepackage{csquotes}
\expandafter\let\csname equation*\endcsname\relax
\expandafter\let\csname endequation*\endcsname\relax
\usepackage{amsmath, amssymb}
\usepackage{amsfonts}
\usepackage{amstext}
\usepackage{physics}
\usepackage{color}
\usepackage{enumerate}
\usepackage{graphicx}
\usepackage{float}
\usepackage{setspace} 
\usepackage{lipsum}
\usepackage{authblk}

\date{} 

\usepackage{dcolumn}
\usepackage{caption}
\usepackage{subcaption}
\usepackage{epsfig}
\usepackage{epsf}
\usepackage{epstopdf}
\usepackage{macros}

\usepackage{enumitem}
\usepackage[numbers,sort&compress]{natbib}
\usepackage{hyperref}
\usepackage{xcolor}
\usepackage{tikz}
\usepackage{pgfplots}
\def\nn{\nonumber}
\usepackage{esint}

\usepackage{scalerel}
\usepackage{stackengine}
\usepackage{braket}
\usepackage{titlesec}
\usepackage{verbatim}
\usepackage{tikz}
\newtheorem{lemma}[theorem]{Lemma}

\newcommand{\vect}[1]{\boldsymbol{#1}}
\DeclareSymbolFont{usualmathcal}{OMS}{cmsy}{m}{n}
\DeclareSymbolFontAlphabet{\mathcal}{usualmathcal}

\newcommand{\be}{\begin{equation}}
\newcommand{\ee}{\end{equation}}
\newcommand{\bea}{\begin{eqnarray}}
\newcommand{\eea}{\end{eqnarray}}
\usepackage{ulem}

\usepackage{color}
\definecolor{dgreen}{rgb}{0,0.7,0}

\title{Absence of {hidden} analytic conserved quantities in harmonically confined rods}
\author[1,2]{Sahil Kumar Singh}
\author[2]{Abhishek Dhar}
\author[3,4]{Sanjay Moudgalya}

\affil[1]{Complex Systems and Statistical Mechanics, Department of Physics and Materials Science,
University of Luxembourg, 30 Avenue des Hauts-Fourneaux, L-4362 Esch-sur-Alzette, Luxembourg}

\affil[2]{International Centre for Theoretical Sciences, Tata Institute of Fundamental Research, Bangalore 560089, India}

\affil[3]{School of Natural Sciences, Technische Universit\"at M\"unchen (TUM), James-Franck-Str. 1, 85748 Garching, Germany}
\affil[4]{Munich Center for Quantum Science and Technology (MCQST), Schellingstr. 4, 80799 M\"unchen, Germany}

\begin{document}

\maketitle
\begin{abstract}
    Systems of hard rods of equal length in a one-dimensional harmonic trap have been observed to exhibit peculiar non-ergodic behavior that might suggest the existence of a novel hidden conserved quantity beyond the two well known ones, i.e., the total energy and the center-of-mass energy.
    In this work, we investigate this possibility by systematically constraining the forms of the conserved quantities, and we rigorously rule out the existence of any extra hidden conserved quantity that is analytic in the positions and momenta of the rods involved.
    We do so by showing two key results: conservation during free motion demands the $U(1)$ invariance of these quantities under rotations of the position and momenta of each rod, and conservation during collisions demand an $S_N$ invariance under the permutation of the momenta of the rods as long as one of the rods have non-zero length. 
    We then show that these conditions imply that any conserved quantity is functionally dependent on the two known conserved quantities. 
    In addition, we show that in the special case where all rods have zero length (i.e., when they are point particles), conservation under collisions only requires invariance under a smaller $S_N$ group of permutations of the labels of the rods, which leads to a much larger set of analytic conserved quantities that we explicitly write down.
    In all, this rigorously clarifies the structure of conserved quantities in the hard rod problem, and motivates the application of such systematic methods to other classical systems.  
\end{abstract}
\tableofcontents
\section{Introduction}
Large-scale behavior of {the dynamics of complex} many-particle systems at {late times} can often be understood using simple universal laws that depend solely on the symmetries {and conserved quantities of the system, } with microscopic details becoming irrelevant~\cite{RevModPhys.71.S346,goldenfeld1992phase}. 
{A classic example is the emergence of the theory of} thermodynamics {which} relies only on the symmetries and conservation laws of the underlying system~\cite{Callen:450289} {but nevertheless is known to almost ubiquitously apply to many-body systems encountered in much of physics}.
{This leads to the broad classification of systems in terms of their conserved quantities.
In particular a clear distinction is made between two classes: \textit{Integrable} systems,  which have as many independent conserved quantities in involution as they have degrees of freedom  and allow a greater degree of microscopic analytical tractability of the system~\cite{babelon2003introduction,arutyunov2019elements,perelomov1990integrable}, and \textit{Non-Integrable} systems, which have just a few conserved quantities and whose microscopic dynamics is generally analytically intractable.
 Many of the simplifications in the standard theories of thermodynamics only apply to non-integrable systems, which is enabled by the fact that their dynamics is ergodic in the phase space allowed by the few conserved quantities~\cite{Gallavotti2016Ergodicity,bookvulpiani}, {such as the relaxation of late-time observables to values predicted by Gibbs ensembles.}
In contrast, even though integrable systems admit microscopic solvability, their extensively many conserved quantities can lead to rather intricate late-time dynamics, {and the relaxation of late-time observables is not always guaranteed even though some do relax to values predicted by the more complicated} Generalized Gibbs Ensemble (GGE) that accounts for all constants of motion~\cite{rigol2007,vidmar2016,PhysRevX.6.041065,PhysRevLett.117.207201,10.21468/SciPostPhysLectNotes.18,Spohn1991LargeScale,Spohn2020GGEToda,Doyon2019GeneralizedHydrodynamicsToda}.}
 %

%

\par
As much as one would like a clear distinction between integrable and non-integrable systems, there are examples of systems that appear to be neither, in that their dynamics does not exhibit ergodicity in the phase space allowed by the known conserved quantities, nor can any extra conserved quantities be immediately identified.
A classic example is the Fermi-Pasta-Ulam-Tsingou (FPUT) numerical experiment~\cite{fermi,Dauxois2008FPUTsingou}, which found a lack of equipartition and non-ergodic behavior for very long timescales despite the presence of an integrability-breaking term.
This led to a decades-long effort to understand the subtle nature of thermalization, ergodicity, integrability, and metastability in nonlinear many-body systems, with numerous works performing numerical experiments on the effect of integrability breaking terms on thermalization~\cite{benettin2005equipartition,carati2005challenge,gallavotti2008statusreport,benettin2011timescales,ponno2011twostage,onorato2015route,DiCintio2018PinnedToda,Dhar2019PinnedToda,Fu_2019,PhysRevE.100.032217,PhysRevLett.122.054102,PhysRevE.104.014218,PhysRevLett.128.134102,vanovac2026weakintegrabilitybreakingperturbations}.
One such system that we will focus on here is a system of hard rods (with non-zero lengths) in a harmonic trap~\cite{bulchandani,bagchi}.
While hard rods without any confinement as well as hard rods with zero lengths in a harmonic trap are both known to be integrable, the combination is expected to break integrability.
One obvious conserved quantity of such a system is the total energy, and a slightly less obvious one is the center-of-mass energy, hence the system is expected to be ergodic in the phase space allowed by these conserved quantities.
Nevertheless, previous works found evidence for a lack of ergodicity in this system in the Poincare sections of the phase space as well as zero Lyapunov exponents for the $3$-{rod} system that signal the absence of classical chaos~\cite{bulchandani,bagchi}. 
{In addition,  even observables in systems with a large number of large number of rods were observed to not thermalize to their expected Gibbs ensembles,  in spite of their Lyapunov exponents being non-zero.} 
This begs the question of whether there are hidden integrals of motion that give rise to this apparent non-ergodicity, lack of thermalization and regular orbits in phase space.
Similar conundrums recently arose in quantum many-body systems, which, similar to classical many-body systems, are universally expected to undergo a quantum version of thermalization with restrictions imposed by their conserved quantities, unless they are integrable~\cite{Rigol2008Thermalization,DAlessio03052016,Mori_2018,gring2012relaxation,mallayya2019prethermalization,bertini2015prethermalization,surace2023weakintegrability,lin2017quasiparticle,lin2017explicit,lin2020slowthermalization,li2021hierarchy,long2023phenomenology,luitz2020prethermalization}.
However, the discovery of quantum ergodicity-breaking phenomena known as quantum many-body scars and Hilbert space fragmentation put into question this simple dichotomy of quantum many-body systems~\cite{Moudgalya_2022,SerbynAbaninPapi_2021,annurev:/content/journals/10.1146/annurev-conmatphys-031620-101617}.
There it was ultimately found that the non-ergodicity in a large class of such systems can indeed be attributed to hidden conserved quantities of the system, which can be highly complicated and non-local, demanding systematic algebraic methods to expose them~\cite{PhysRevX.12.011050,PhysRevX.14.041069,PhysRevB.107.224312}, {similar to ones also known in certain classical many-body systems~\cite{Dhar1993,PhysRevLett.73.2135,MKHariMenon_1995,Menon_Barma_Dhar_1997}.}
In addition, systematic methods are being widely applied to rigorously demonstrate the absence of local conserved quantities in a large class of commonly-studied quantum many-body systems, essentially proving their non-integrability~\cite{Shiraishi_2019,shiraishi2024absence,8l6f-z1jm,Shiraishi2025IntegrabilityClassification,Shiraishi2026XYXYZ}.
Given the success of systematic methods in quantum many-body systems, in this work, we apply classical analogs of those ideas to address the question of hidden conserved quantities in this classical system of hard rods in a harmonic trap.

Of course, performing systematic searches for conserved quantities in classical systems is by no means a new idea.
Classical integrability is often established through the Hamilton--Jacobi formalism, where separability can produce integrals in involution~\cite{Arnold1989}, or through a Lax representation, whose isospectral evolution yields conserved spectral invariants~\cite{babelon2003introduction,Lax1968}. 
Such structures need not be evident a priori, however, and in their absence it is natural to search directly for additional conserved quantities under controlled assumptions.
Some of the earliest works date back to 1887, when in the context of 3-body problem in gravitation, Heinrich Bruns proved the absence of  additional algebraic integrals of motion (polynomials+rational functions of the rectangular coordinates)~\cite{Bruns1887,JulliardTosel2000BrunsTheorem}, and Poincaré later showed the absence of additional analytic integrals in the same setting for sufficiently small mass ratios~\cite{Poincare1890,hp1892mna}.
There has also been a large body of works that systematically constrain the forms of Hamiltonian systems with two degrees of freedom that possess additional hidden integrals of motion,  which are reviewed in~\cite{HIETARINTA198787,KatsuyaNakagawa_2001,Pucacco_Rosquist_2005,10.1063/1.1917311}. 
Related extensions to homogeneous polynomial potentials with three or more degrees of freedom have also been developed~\cite{Przybylska_2009}.
To our knowledge, such an exploration has not been performed for many-body systems in a harmonic trap, and in this work we fill in this gap by studying many closely related systems in a harmonic trap.
First, we start by characterizing all analytic conserved quantities of multiple non-interacting rods in free harmonic motion{, which do not undergo collisions.
We show that these should be invariant under a $U(1)$ transformation that rotates between the individual position and momentum of each rod, and the complete set of such invariants are generated by a simple set of quadratic invariants (Theorem~\ref{th:freemotion}){, which also leads to its superintegrability}.  
Second, we turn to the case of rods with all zero lengths that are moving and colliding in a harmonic trap. 
We show that the conserved quantities there are invariant under the same $U(1)$ as free motion, as well as under the {permutation group $S_N$ that permutes the labels of the particles (thus permuting their positions $\{x_i\}$ and momenta $\{p_i\}$ in the same way), which allows us to construct their complete set of analytic conserved quantities (Theorem~\ref{thm:a=0}) {and show that it is superintegrable}.
Third, we study a system of rods in a harmonic trap undergoing a Stochastic Momentum Exchange Dynamics (SMED).
We show that their conserved quantities are invariant under the permutation group $S_N^{(p)}$ of exchange of the momenta $\{p_i\}$ in addition to the $U(1)$ of free motion, which we then show rules out the existence of an extra independent conserved quantity apart from the two known ones -- total energy and center of mass energy (Theorem~\ref{thm:Pexchange}).
Finally, we study the dynamics of rods in a harmonic trap with at least one rod of non-zero length.
We show that conserved quantities there too should also be invariant under the permutation group $S_N^{(p)}$ in addition of the $U(1)$, same as in the case of SMED, hence ruling out additional hidden analytic conserved quantities (Lemma~\ref{lem:collisions}, Theorem~\ref{thm:collisions}).
These four cases are summarized in Table~\ref{tab:symmetry-summary}.
The remainder of the paper is organized as follows.
In Sec.~\ref{sec:model}, we review the model of harmonically confined hard rods. 
In Sec.~\ref{sec:freemotionU1}, we construct exhaustively all conserved quantities of the free harmonic motion.
In Sec.~\ref{sec:a=0}, we analyze the special case of vanishing rod length, where a large family of conserved quantities appears.
In Sec.~\ref{sec:SMED}, we study the SMED system and show that no additional conserved quantities arise there.
In Sec.~\ref{sec:absenceofthird}, we present our main analytical result: when there is at least one rod of non-zero length, no independent conserved quantity analytic in the phase-space variables exists beyond the total energy and center-of-mass energy.
Finally, in Sec.~\ref{sec:discussion}, we summarize our results and discuss their implications for ergodicity and thermalization in classical many-body systems.

\begin{table}[t]
\centering
\small
\renewcommand{\arraystretch}{1.25}
\begin{tabularx}{\textwidth}{@{}
C{0.25\textwidth}
Y
C{0.22\textwidth}
@{}}
\hline
\textbf{System} & \textbf{Invariant Group} & \textbf{Conserved quantities} \\
\hline
Free motion
&
$U(1)$: $(x_i,p_i) \rightarrow (x_i \cos\theta + p_i \sin\theta, -x_i \sin \theta + p_i \cos\theta)$
&
Thm.~\ref{th:freemotion}
\\
Point particles
&
$U(1)$; $S^{(x,p)}_N$: $(x_i, p_i) \rightarrow (x_{\sigma(i)}, p_{\sigma(i)})$
&
Thm.~\ref{thm:a=0}
\\
SMED \& Rods with at least one non-zero length
&
$U(1)$; $S_N^{(p)}$: $(x_i, p_i) \rightarrow (x_i, p_{\sigma(i)})$; $S_N^{(x)}$: $(x_i, p_i) \rightarrow (x_{\sigma(i)}, p_i)$
&
Thms.~\ref{thm:Pexchange}, \ref{thm:collisions}
\\
\hline
\end{tabularx}
\caption{{Summary of our results for various systems studied in this work. 
In all these systems, we show that conserved quantities analytic in $\{x_i\}$ and $\{p_i\}$ should be invariant under certain group transformations.
These restrictions then allow us to derive the exhaustive set of analytic conserved quantities in each of these systems. }}
\label{tab:symmetry-summary}
\end{table}

\section{{Review of the model and its properties}}
\label{sec:model}
\subsection{{Definition and conserved quantities}}
We consider the well-studied system of hard rods in a harmonic trap.
This consists of $N$ hard rods of lengths $\{a_i\}$ and equal masses $m$, which are confined in a harmonic trap with a trap frequency of {$\omega$}.
The Hamiltonian is given by:
\begin{equation}
    H = \sum_{i=1}^N \left( \frac{p_i^2}{2 {m}} + \frac{1}{2} m \omega^2 x_i^2 \right) + \sum_{i} U_{i,i+1},\;\;\;U_{i,i+1} = 
    \begin{cases}
        \infty & \text{if } |x_i - x_{i+1}| \leq \frac{a_i + a_{i+1}}{2}, \\
        0 & \text{otherwise}.
    \end{cases}
\label{eq:Hamiltonian}
\end{equation}
where $U(r)$ is the hard-core interaction between the rods, $\{x_i\}$ are the (time-dependent) positions of the center-of-mass of the rods ordered as $x_1 < x_2 < \cdots < x_N$, $\{p_i\}$ are their (time-dependent) momenta.
{Without loss of generality}, we set $m = \omega = 1$.
The dynamics of this system thus consists of two parts: free-motion, and collisions, which we assume to be elastic.
It is convenient to separate the free part of the Hamiltonian and write its equations of motion as
\begin{equation}
    H_{\rm free} = \frac{1}{2}\sum_{i = 1}^N{(p_i^2 + x_i^2)} \implies \;\;\;\dot x_i = \frac{\partial H}{\partial p_i} = p_i,\;\;\;\dot p_i = -\frac{\partial H}{\partial x_i} = -x_i,
\label{eq:freemotion}
\end{equation}
where $\dot A$ denotes the time-derivative of $A$.
On the other hand, collisions only happen when two rods touch, and the two rods $n$ and $n+1$ involved then instantaneously exchange their momenta due to elasticity, giving rise to the condition
\begin{equation}
    (p_{n}^+, p_{n+1}^+) = (p_{n+1}^-, p_n^-),\;\;\;\text{if}\;\;\;x_{n+1} - x_n = {b_{n}},\;\;\;{b_{n} \defn \frac{a_n + a_{n+1}}{2}}. 
\label{eq:collision}
\end{equation}
where $p^-$ and $p^+$ are the momenta of the rods before and after the collision. 
Any conserved quantity $Q(\{x_i\}, \{p_i\})$ of the Hamiltonian Eq.~(\ref{eq:Hamiltonian}) has to be individually conserved under both the free-motion Eq.~(\ref{eq:freemotion}) and the collisions Eq.~(\ref{eq:collision}).
Hence it should satisfy the following conditions
\begin{gather}
    \dot Q(\{x_i\}, \{p_i\}) = \sum_i{\left(\dot x_i \frac{\partial Q}{\partial x_i} + \dot p_i\frac{ \partial Q}{\partial p_i}\right)} = \sum_i{\left(p_i\frac{ \partial Q}{\partial x_i} - x_i\frac{\partial Q}{\partial p_i}\right)} = 0,\label{eq:freeconserved}\\
    Q(\{\cdots, x_n, x_{n+1} = x_n + {b_n}, \cdots\}, \{\cdots, p_n, p_{n+1}, \cdots\}) = Q(\{\cdots, x_n, x_{n+1} = x_n + {b_n}, \cdots\}, \{\cdots, p_{n+1}, p_{n}, \cdots\})\;\;\forall\ n.\label{eq:collisionconserved}
\end{gather}
Since the condition of Eq.~(\ref{eq:collisionconserved}) involves setting $x_{n + 1} = x_n + b_n$, and exchanging the momenta $p_n$ and $p_{n+1}$, it is convenient to introduce compact notations for those operations.
 Here we will denote these using operations $\mathcal{X}^{(b)}_{n}$ and $\mathcal{P}_{n}$ that act on functions as
\begin{align}
     \mathcal{X}^{(b)}_{n} [Q(\{\cdots, x_n, x_{n+1}, \cdots\}, \{p_i\})]  &= Q(\{\cdots, x_{n}, x_{n} + b, \cdots \}, \{p_i\})\nn \\
     \mathcal{P}_{n} [Q(\{x_i\}, \{\cdots, p_n, p_{n+1}, \cdots\})]  &= Q(\{x_i\}, \{\cdots, p_{n+1}, p_n, \cdots\}), 
\label{eq:superopdefn}
 \end{align}
 where the compositions of these operations will simply be denoted as $\mP_n \mX_n^{(b)}$.
 For example, in this notation, Eq.~(\ref{eq:collisionconserved}) for a single pair of rods $n$ and $n+1$ reads
 \begin{equation}
 \label{eq:collisioncondition}
\mathcal{X}_n^{({b_n})}[Q] = \mathcal{P}_n\mathcal{X}_n^{({b_n})}[Q].
 \end{equation}
Two conserved quantities that satisfy Eqs.~(\ref{eq:freeconserved}) and (\ref{eq:collisionconserved}) are well-known to be energy $E$ and the center-of-mass energy $E_{\rm cm}$, given by:
\begin{equation}
    E = \sum_{i = 1}^N{(x_i^2 + p_i^2)},\;\;\; E_{\rm cm} = \left( \sum_{i=1}^N x_i \right)^2 + \left(\sum_{i=1}^N p_i \right)^2,
\label{eq:traditionalconserved}
\end{equation}
where we have omitted overall factors for convenience.
It is easy to see that these satisfy Eqs.~(\ref{eq:freeconserved}) and (\ref{eq:collisionconserved}). 
It is convenient in many places below to work with complex phase-space coordinates defined as
\begin{equation}\label{eq:z_def}
(z_i, \bar z_i) = (x_i+\mathrm{i}p_i, x_i- \mathrm{i}p_i) \;\;\iff\;\;(x_i, p_i) = \left(\frac{z_i+\bar z_i}{2}, \frac{z_i- \bar z_i}{2\mathrm{i}}\right),\;\;\;\;\text{where}\;\; \mathrm{i}=\sqrt{-1}.
\end{equation}
In these coordinates, the known conserved quantities are
\begin{equation}
    E = \sum_{i=1}^N{\left|z_i\right|^2},\;\;\;E_{\rm cm} = \left|\sum_{i=1}^N{z_i}\right|^2. 
\label{eq:EEcmdefn}
\end{equation}
\subsection{Systematic search for analytic conserved quantities}
Our main aim in this work is to systematically explore if $E$ and $E_{\rm cm}$ are the \textit{only} conserved quantities in the system of hard rods, or if there are additional ones.
%
%
First, we observe that given any two conserved quantities $Q_\alpha$ and $Q_\beta$ that satisfy Eqs.~(\ref{eq:freeconserved}) and (\ref{eq:collisionconserved}), their products and complex linear combinations also satisfy them.
Hence these conserved quantities form an \textit{algebra} of the form
\begin{equation}
    \mathbb{C}[\{Q_\alpha\}] \defn \{ \sum_{n}{\sum_{\{\alpha_j\}}{c_{\alpha_1 \cdots \alpha_n} Q_{\alpha_1} \cdots Q_{\alpha_n}}},\;\;\;c_{\alpha_1 \cdots \alpha_n} \in \mathbb{C}\},
\end{equation}
where $\{Q_\alpha\}$ {are referred to as the generators of the algebra}. 
{Usually these generators are also required to all be functionally independent generators, but here we do not necessarily impose such a restriction.} When these generators are independent bounded-degree polynomials of $\{x_i\}$ and $\{p_i\}$, we can also refer to $\mathbb{C}[\{Q_\alpha\}]$ as a \textit{polynomial ring}. 
For the case of hard rods we discuss here, the main question we wish to answer is if the set of analytic conserved quantities $\mathcal Q \in \mathbb{C}[\{x_i\}, \{p_i\}]$ that satisfy Eqs.~(\ref{eq:freeconserved}) and (\ref{eq:collisionconserved}) is the algebra generated by $E$ and $E_{\rm cm}$:
\begin{equation}
    \mathcal Q \overset{?}{=} \mathbb{C}[E, E_{\rm cm}]. 
\end{equation}
In Sec.~\ref{sec:absenceofthird}, we show that this is true when at least a single rod has a non-zero length, i.e., $a_i \neq 0$ for some $i$,  which demonstrates that there are no independent analytic conserved quantities for this system apart from $E$ and $E_{\rm cm}$.
We do so by mapping it onto the case of the SMED process, for which we show the same in Sec.~\ref{sec:SMED}.
However, in Sec.~\ref{sec:a=0}, we find, perhaps as expected, that this is not the case when $a_i = 0$ for all $i$, i.e., for harmonically confined point particles, for which there are numerous additional conserved quantities for which we obtain simple closed forms.
Along the way, in Sec.~\ref{sec:freemotionU1} we also recover the exhaustive set of conserved quantities that are conserved only during free motion, i.e., those that satisfy Eq.~(\ref{eq:freeconserved}), which is of course very well-known, but to our knowledge never shown formally {in the literature}.  
{In some cases, we also discuss the independence of any set of conserved quantities $\{Q_{\alpha_j},\;\;j = 1, \cdots, k\}$.}
This can be easily probed by computing the rank of their Jacobian matrix, which is of the form
\begin{equation}
\begin{pmatrix}
    \frac{\partial Q_{\alpha_1}}{\partial x_1}&\cdots &\frac{\partial Q_{\alpha_1}}{\partial x_N}& \frac{\partial Q_{\alpha_1}}{\partial p_1} & \cdots &\frac{\partial Q_{\alpha_1}}{\partial p_N}\\\vdots &\ddots & \vdots & \vdots & \ddots &
    \vdots\\
    \frac{\partial Q_{\alpha_k}}{\partial x_1}&\cdots &\frac{\partial Q_{\alpha_k}}{\partial x_N}& \frac{\partial Q_{\alpha_k}}{\partial p_1} & \cdots& \frac{\partial Q_{\alpha_k}}{\partial p_N}
\label{eq:jacobian}
\end{pmatrix},
\end{equation}
which implies their functional independence if it has rank $k$ for generic values of the variables.
While we have shown the Jacobian in the standard phase space coordinates, this can also be computed in other variables {that can be related to the phase space variables by a locally invertible transformation}, which might sometimes be more convenient, {as we will show in some examples below}.
Before we proceed, we note that our proof holds for any quantity which can be represented as a power series in {any open region around} the origin of the phase space, given by $\{x_i = 0\}$ and $\{p_i = 0\}$.
Hence {it applies to all quantities that are} \textit{analytic} at the origin{, even though they might not be analytic elsewhere in the phase space}.
{This class of quantities is more general than \textit{globally analytic} quantities, which are analytic everywhere in the phase space.}

\section{Exhaustive set of analytic conserved quantities for free motion}\label{sec:freemotionU1}
We first construct the most general analytic conserved quantity $Q \in \mathbb{C}[\{x_i\}, \{p_i\}]$ of free motion in a harmonic trap {consisting of non-interacting rods that do not collide but pass through each other}.
It is easy to verify that the lowest-degree conserved quantities are the following two families:
\begin{equation}
    C_{ij}^+ = x_i x_j + p_i p_j,\;\;\;\; C_{ij}^- = x_i p_j - p_i x_j.
\label{eq:Q2families}
\end{equation}
These generate the algebra of conserved quantities $\mathbb{C}[\{C_{ij}^+, C_{ij}^-\}]$.
We then wish to ask if this is the \textit{exhaustive} set of conserved quantities for free motion. 
Below, we show that this is indeed the case using the well-known idea that free harmonic motion corresponds to a rotation in each $x{_i}-p{_i}$ plane.
Thus the problem reduces to that of finding functions invariant under such rotations in all the $x{_i}-p{_i}$ planes, or the invariants of the $SO(2) \cong U(1)$ group.
It is already a known result in the representation theory of groups~\cite{weyl1997} that any such invariant can be constructed out of the scalar product $x_i x_j+p_i p_j$ and the cross product/determinant $x_ip_j-x_jp_i$.
We nevertheless prove it again here for completeness.
\begin{theorem}
    \label{th:freemotion}
    Any $Q \in {\mathbb{C}[\{x_i\}, \{p_i\}]}$ that is conserved during free motion according to Eq.~(\ref{eq:freeconserved}) {is $U(1)$ invariant under the operation $(x_i, p_i) \rightarrow (x_i \cos\theta + p_i\sin\theta, -x_i\sin\theta + p_i\cos\theta)$, and} is generated by the degree 2 conserved polynomials of Eq.~(\ref{eq:Q2families}), i.e., $Q \in \mathbb{C}[\{C_{ij}^+, C_{ij}^-\}]$.
    Equivalently, in the complex coordinates, any $Q \in \mathbb{C}[\{z_i\}, \{\bar z_i\}]$ that is conserved during free motion satisfies $Q \in \mathbb{C}[\{z_i \bar z_j\}]$, which is $U(1)$ invariant as $(z_i, \bar z_i) \rightarrow (e^{-\mathrm{i}\theta}z_i, e^{\mathrm{i}\theta}\bar z_i)$.
\end{theorem}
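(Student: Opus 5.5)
The plan is to pass to the complex coordinates of Eq.~(\ref{eq:z_def}), where the free-motion condition Eq.~(\ref{eq:freeconserved}) becomes diagonal on monomials. Using $\partial_{x_i} = \partial_{z_i} + \partial_{\bar z_i}$ and $\partial_{p_i} = \mathrm{i}(\partial_{z_i} - \partial_{\bar z_i})$, the generator of free motion is
\begin{equation*}
L \defn \sum_i \left(p_i \partial_{x_i} - x_i \partial_{p_i}\right) = -\mathrm{i}\sum_i \left(z_i \partial_{z_i} - \bar z_i \partial_{\bar z_i}\right).
\end{equation*}
We will verify this sign convention by checking that $L[z_i] = \dot z_i = p_i - \mathrm{i}x_i = -\mathrm{i}z_i$.

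Next, expand $Q$ as a power series around the origin in the monomials $z^{\alpha}\bar z^{\beta} = \prod_i z_i^{\alpha_i}\bar z_i^{\beta_i}$. This is legitimate because $(x,p)\mapsto(z,\bar z)$ is an invertible linear change of variables, so analyticity at the origin is preserved. Each monomial is an eigenvector of $L$:
\begin{equation*}
L[z^\alpha \bar z^\beta] = -\mathrm{i}\left(|\alpha| - |\beta|\right) z^\alpha \bar z^\beta, \qquad |\alpha| = \sum_i \alpha_i .
\end{equation*}
Since distinct monomials are linearly independent, and $L$ acts termwise on a convergent power series (it preserves total degree), $L[Q]=0$ forces every coefficient with $|\alpha|\neq|\beta|$ to vanish.

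Integrating the flow generated by $L$, which is $z_i\mapsto e^{-\mathrm{i}\theta}z_i$, then immediately gives the stated $U(1)$ invariance. Conversely, $U(1)$ invariance is equivalent to $L[Q]=0$ by differentiating at $\theta=0$. Translating back to real variables, this rotation is exactly $(x_i,p_i)\to(x_i\cos\theta+p_i\sin\theta,\,-x_i\sin\theta+p_i\cos\theta)$.

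The remaining combinatorial step is to show that every monomial with $|\alpha|=|\beta|=d$ is a product of $d$ factors $z_i\bar z_j$. To do this, list the $z$-indices with multiplicity as $i_1,\dots,i_d$ and the $\bar z$-indices as $j_1,\dots,j_d$; pairing them in order gives $z^\alpha\bar z^\beta=\prod_{k=1}^d z_{i_k}\bar z_{j_k}$. Hence $Q\in\mathbb{C}[\{z_i\bar z_j\}]$, understood as a convergent power series in these generators when $Q$ is not polynomial. Finally, we express the generators in real form: $z_i\bar z_j = (x_ix_j+p_ip_j) + \mathrm{i}(p_ix_j - x_ip_j) = C^+_{ij} - \mathrm{i}C^-_{ij}$. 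Conversely, $C^\pm_{ij}$ are recovered as the combinations $(z_i\bar z_j \pm z_j\bar z_i)/2$, up to factors of $\mathrm{i}$. This shows the two algebras coincide, so $Q\in\mathbb{C}[\{C^+_{ij},C^-_{ij}\}]$.

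The main point needing care is the analytic (rather than polynomial) case: we must justify that $L$ may be applied termwise and that vanishing of $L[Q]$ implies vanishing of each coefficient. This works because $L$ maps each homogeneous component to itself, so the argument can be applied degree by degree to the homogeneous polynomials making up the convergent series.
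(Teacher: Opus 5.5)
Your proposal is correct and follows essentially the same route as the paper: you pass to $(z_i,\bar z_i)$, observe that the free-motion generator is diagonal on monomials with eigenvalue proportional to $|\alpha|-|\beta|$, and conclude that $Q$ is a sum of balanced monomials (equivalently, that it is $U(1)$ invariant). You then note that these monomials are generated by $z_i\bar z_j = C^+_{ij}-\mathrm{i}C^-_{ij}$; your explicit index-pairing argument and your degree-by-degree treatment of the analytic case fill in details that the paper states without proof.
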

\begin{proof}
Working with complex coordinates of Eq.~(\ref{eq:z_def}), any general $Q \in \mathbb{C}[\{x_i\}, \{p_i\}] = \mathbb{C}[\{z_i\}, \{\bar z_i\}]$ can be written as
\begin{equation}\label{eq:general_z_barz}
Q(\{z_i\},\{\bar z_i\})= \sum_{{\bf r}, {\bf s}}
C_{\mathbf r,\mathbf s} M_{\mathbf r, \mathbf s}(\{z_i\}, \{\bar z_i\}),\;\;\;\;M_{\mathbf{r}, \mathbf{s}}(\{z_i\}, \{\bar z_i\}) \defn \prod_{i=1}^N z_i^{r_i}\bar z_i^{s_i},
\end{equation}
where ${\bf r}$ and ${\bf s}$ are $N$-dimensional vectors with entries $\{r_i\}$ and $\{s_i\}$ in $\mathbb{Z}_{\geq 0}$, and the coefficients $C_{\mathbf{r}, \mathbf{s}}$ are some complex numbers.
Eq.~(\ref{eq:freemotion}) in complex coordinates reads
\begin{equation}\label{eq:z_eom}
\dot z_i=-\mathrm{i}z_i,\qquad \dot{\bar z}_i=\mathrm{i}\bar z_i,\;\;\;\implies\;\;\;{z_i(t) = e^{-\mathrm{i}t} z_i(0),\;\;\;\bar z_i(t) = e^{\mathrm{i}t} \bar z_i(0).} 
\end{equation}
{Note that in the phase space coordinates this is equivalent to
\begin{equation}
    x_i(t) = x_i(0) \cos t + p_i(0) \sin t,\;\;\;p_i(t) = -x_i(0) \sin t + p_i(0) \cos t.
\label{eq:U1phasespace}
\end{equation}}
Taking the time-derivative of \eqref{eq:general_z_barz} 
and using \eqref{eq:z_eom} gives
\begin{equation}\label{eq:L_def}
\frac{dQ}{dt}
=-\mathrm{i}\sum_i\Big(z_i\frac{\partial Q}{\partial z_i}-\bar z_i\frac{\partial Q}{\partial\bar z_i}\Big) \equiv -\mathrm{i}\,\mathcal{L}[Q],\;\;\;\;\mathcal{L}\defn\sum_{i=1}^N\Big(z_i\frac{\partial}{\partial z_i}-\bar z_i\frac{\partial}{\partial\bar z_i}\Big).
\end{equation}
where we have defined the linear first-order differential operator $\mathcal{L}$.
Conservation of $Q$ (i.e., $\dot Q=0$) is therefore equivalent to the condition $\mathcal{L}[Q]=0$.
The monomials $M_{\mathbf r, \mathbf s}$ in Eq.~(\ref{eq:general_z_barz})
are eigenfunctions of $\mathcal{L}$, and hence we have
\begin{equation}\label{eq:L_on_monomial}
\mathcal{L}\,M_{\mathbf r,\mathbf s}=(|\mathbf r|-|\mathbf s|)\,M_{\mathbf r,\mathbf s} = 0,\;\;\;\implies\;\;|\mathbf r| = |\mathbf s|,\;\;\;\;\;\;|\mathbf r|:=\sum_{i=1}^N r_i,\;|\mathbf s|:=\sum_{i=1}^N s_i.
\end{equation}
Since these monomials are all linearly independent, any $Q$ that is annihilated by $\mathcal{L}$ is necessarily the sum of \textit{balanced} monomials, i.e., those with the same $z$- and $\bar z$-degree:
\begin{equation}\label{eq:conserved_general}
{Q(\{z_i\}, \{\bar z_i\})}=\sum_{|\mathbf r|=|\mathbf s|}
c_{\mathbf r,\mathbf s}\;\prod_{i=1}^N z_i^{r_i}\bar z_i^{s_i}\;\;\in \mathbb{C}[\{z_i \bar z_j\}],
\end{equation}
where in the second step we have used an elementary and useful observation that the $N^2$ monomials $\{z_i\bar z_j\}$ generate all balanced monomials.
This reduction can also be viewed as a direct consequence of the $U(1)$ phase-rotation symmetry in Eq.~(\ref{eq:z_eom}).
$-\mathrm{i}\mathcal{L}$ (with $\mathcal{L}$ defined in Eq.~\ref{eq:L_def}) is just the infinitesimal generator of the $SO(2)\cong U(1)$ acting on functions, i.e.,$e^{-\mathrm{i}\theta\mathcal{L}}f(z,\bar{z})=f(e^{-\mathrm{i}\theta}z,e^{\mathrm{i}\theta}\bar{z})$, and functions satisfying $\mathcal{L}f=0$ are simply those invariant under the $SO(2)$ or $U(1)$ transformation of Eqs.~(\ref{eq:z_eom}) {and (\ref{eq:U1phasespace})}.
Noting that 
\begin{equation}
    z_i\bar{z}_j=x_ix_j+p_ip_j+\mathrm{i}(p_ix_j-x_ip_j)\;\;\;\implies\;\;\; \mathbb{C}[\{z_i \bar z_j\}] = \mathbb{C}[\{C_{ij}^+, C_{ij}^-\}]
\label{eq:balancedpoly}
\end{equation}
completes the proof that every $Q \in \mathbb{C}[\{z_i\}, \{\bar z_i\}]$ conserved under free motion is generated by the elementary polynomials of Eq.~(\ref{eq:Q2families}).
\end{proof}
{This system of free particles is also superintegrable, which means that it has $2N-1$ functionally independent integrals of motion, which is the maximum allowed for any dynamical system of $2N$ variables.}
$N$ of them are {the} energies of each particle $E_i= C_{ii}^+ = x_i^2+p_i^2$, but there are more such as $C_{1j}^+ =x_1 x_j+p_1 p_j$ for $2 \leq j \leq N${, which are shown in Eq.~(\ref{eq:Q2families})}.
These form a total of $N+N-1=2N-1$ independent integrals of motion, since in polar phase space coordinates {these read}
\begin{equation}
x_i=r_i\cos\theta_i,\;\; p_i=r_i\sin\theta_i,\;\;\;z_i = r_i e^{i\theta_i}\;\;\implies\;\;E_i=r_i^2,\;\;C^+_{1j}=r_1r_j\cos(\theta_j-\theta_1), 
\label{eq:polarcoordinates}
\end{equation}
{and their} Jacobian {of Eq.~(\ref{eq:jacobian})} with respect to $\{r_i,\theta_i\}$ has full rank {of $2N-1$ for generic values of $\{r_j\}$ and $\{\theta_j\}$}.
This makes it maximally superintegrable, as also follows as a special case of~\cite{RodriguezTempestaWinternitz2008}}. 

\section{{Exhaustive set of analytic conserved quantities for zero length rods (point particles) with collisions}}
\label{sec:a=0}
We now discuss the case when all the rods are point particles, i.e., $a_n = 0$ for all $n$.
In this case, since the collisions between the particles are elastic, we know that the system should be almost identical to the case of free motion of particles where the different particle pass through each other. 
The only difference is that the particles here get reflected off each other, which requires keeping track of the labels of each of the particles.
As we discuss below, this enforces an $S^{(x,p)}_N$ permutation symmetry on the conserved quantities, where this is the permutation symmetry of the labels, which corresponds to joint permutations of the positions and the momenta, i.e., $(x_i, p_i) \rightarrow (x_{\sigma(i)}, p_{\sigma(i)})$ {for $\sigma \in S_N$}.
%
%
Knowing that any conserved quantity $Q$ should necessarily be conserved under free motion, we start by showing the following Lemma that applies to collisions.
\begin{lemma}
\label{lem:a=0}
    When $a_n = a_{n+1} = 0$, any $Q\in \mathbb{C}[\{C_{ij}^+, C_{ij}^-\}] = \mathbb{C}[\{z_i \bar z_j\}]$ that satisfies Eq.~(\ref{eq:collisioncondition}) {should necessarily be symmetric under the exchange of the labels $n$ and $n+1$, i.e., the simultaneous exchange of the positions and momenta of the rods $n$ and $n+1$}.
\end{lemma}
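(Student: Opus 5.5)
The plan is to turn the collision condition into the statement that a certain $U(1)$-invariant function vanishes on a hyperplane. We then use the rotation symmetry from Theorem~\ref{th:freemotion} to sweep that hyperplane over all of phase space.

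Let $\mathcal{S}_n$ denote the exchange of labels $n$ and $n+1$, i.e.\ $(x_n,p_n)\leftrightarrow(x_{n+1},p_{n+1})$, or $z_n\leftrightarrow z_{n+1}$ in complex coordinates. Define
\begin{equation*}
D \defn Q - \mathcal{S}_n[Q].
\end{equation*}
The goal is to show $D\equiv 0$. The argument has three steps.

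\textbf{Step 1 ($D$ is $U(1)$ invariant).} Since $Q\in\mathbb{C}[\{z_i\bar z_j\}]$, it is invariant under the rotation $R_\theta:(z_i,\bar z_i)\to(e^{-\mathrm{i}\theta}z_i,e^{\mathrm{i}\theta}\bar z_i)$ of Theorem~\ref{th:freemotion}. The map $\mathcal{S}_n$ permutes the generators $z_i\bar z_j$ among themselves, so $\mathcal{S}_n[Q]$ lies in the same algebra. Equivalently, $\mathcal{S}_n$ commutes with $R_\theta$, because $R_\theta$ acts identically on every label. Hence $D$ is also $U(1)$ invariant.

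\textbf{Step 2 ($D$ vanishes on $H_0$).} Let $H_0=\{x_n=x_{n+1}\}$. On $H_0$ the two positions coincide, so there the label swap $\mathcal{S}_n$ acts exactly as the momentum swap $\mathcal{P}_n$. Because $b_n=(a_n+a_{n+1})/2=0$, the collision condition Eq.~(\ref{eq:collisioncondition}) reads $\mathcal{X}^{(0)}_n[Q]=\mathcal{P}_n\mathcal{X}^{(0)}_n[Q]$. This is precisely the statement that $D|_{H_0}=0$.

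\textbf{Step 3 ($D$ vanishes everywhere).} By Step 1, $D$ also vanishes on every rotated hyperplane $R_\theta H_0$. Writing $\Delta x=x_n-x_{n+1}$ and $\Delta p=p_n-p_{n+1}$, these hyperplanes are, with the sign convention of Eq.~(\ref{eq:U1phasespace}),
\begin{equation*}
R_\theta H_0=\{\Delta x\cos\theta+\Delta p\sin\theta=0\},
\end{equation*}
up to the sign of $\theta$, which is irrelevant since all $\theta$ are allowed. The key observation is that every real phase-space point lies on one of them: given any real $(\Delta x,\Delta p)$, we can choose $\theta$ so that $(\cos\theta,\sin\theta)$ is orthogonal to $(\Delta x,\Delta p)$. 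Therefore $D$ vanishes on all of real phase space, or at least on any open neighborhood of the origin, since each $R_\theta$ preserves balls around the origin. A polynomial, or a power series convergent near the origin, that vanishes on a real open set has all coefficients zero. Hence $D\equiv0$, i.e.\ $Q=\mathcal{S}_n[Q]$, which is the claimed label-exchange symmetry.

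The only delicate point is Step 3: making sure the union of the rotated hyperplanes really covers an open set. It does, since it covers everything. One must also check that, with analyticity only near the origin, the rotation keeps us inside the domain of convergence; this holds because $R_\theta$ is orthogonal in each $(x_i,p_i)$ plane. The remaining steps are direct bookkeeping from Theorem~\ref{th:freemotion} and Eq.~(\ref{eq:collisioncondition}).
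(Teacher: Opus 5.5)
Your proof is correct, and its setup matches the paper's. The paper also forms the difference $g_n=Q_{2q}-\sigma_n[Q_{2q}]$ (your $D$, taken degree by degree). It also notes that $g_n$ vanishes on $x_n=x_{n+1}$, because there the label swap agrees with the momentum swap.

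The two routes split at the key step: showing that a $U(1)$-invariant function vanishing on this hyperplane is zero.

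\textbf{The paper's route} is algebraic. By the factor theorem, $g_n=L_nU$ with $L_n=L^{+}+L^{-}$. It then decomposes $U$ into $\delta$-homogeneous parts. The extreme components $L^{+}U_{T_{\max}}$ and $L^{-}U_{T_{\min}}$ of $g_n$ cannot both have $\delta$-degree $0$, which forces $U=0$.

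\textbf{Your route} uses the group action directly. The $U(1)$ orbit of $\{\Delta x=0\}$ covers all of phase space, since any real $(\Delta x,\Delta p)$ can be rotated to have $\Delta x=0$.

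\textbf{What your route buys.} It is shorter and avoids both the homogeneous decomposition and the factor theorem. It also works verbatim for power series on a Euclidean ball. In fact it gives $D(y)=0$ pointwise for every $y$. So your closing appeal to the identity theorem is unnecessary, and this step uses no analyticity once global $U(1)$ invariance is granted.

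\textbf{What the paper's route buys.} Its $\delta$-grading argument is purely algebraic. It is the same style of bookkeeping the paper reuses in Lemma~\ref{lem:collisions}, where your trick has no analogue. For $b_n\neq 0$ the collision surface is affine, and the label swap no longer agrees with the momentum swap on it. Moreover, $\mathcal{P}_n$ does not commute with $R_\theta$. So there is no $U(1)$-invariant difference function to sweep around.
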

\begin{proof}
Since $Q$ is conserved during free motion, we can first expand it in terms of homogeneous polynomials as $Q=\sum_qQ_{2q}$, where $Q_{2q}$ is a balanced polynomial with total degree $2q$, which also means that it has \textit{bidegree} $(q,q)$ [which are the degrees under the $\{z_i\}$ and $\{\bar z_i\}$]. 
Imposing Eq.~(\ref{eq:collisioncondition}) for the collision of rods $n$ and $n+1$, {and noting that $b_n = 0$}, we obtain
\begin{equation}
    \sum_{q}\mathcal{X}_n^{(0)}[Q_{2q}] = \sum_{q}\mathcal{P}_n\mathcal{X}_n^{(0)}[Q_{2q}]\;\;\;\implies\;\;\;\mathcal{X}_n^{(0)}[Q_{2q}]=\mathcal{P}_n\mathcal{X}_n^{(0)}[Q_{2q}],
\label{eq:a=0condition}
\end{equation}
where we have used the fact that all the terms in the sums have different degrees and thus are linearly independent.
We now express Eq.~(\ref{eq:a=0condition}) in terms of the complex variables $\{z_n\}$ and $\{\bar{z}_n\}$.
First, we note that the action of $\mX^{(0)}$ is just setting $x_{n + 1} = x_n$, and can be expressed as
\begin{equation}
    \Re{z_{n+1}} = \Re{z_n} \;\;\implies\;\;L_n \defn (z_n-z_{n+1})+(\bar{z}_n-\bar{z}_{n+1})=0.
\label{eq:realcondition}
\end{equation}
Second, since $x_{n+1} = x_{n}$ when $L_n = 0$, Eq.~(\ref{eq:a=0condition}) can equivalently be written as
\begin{equation}
    g_n \defn Q_{2q}-\sigma_n [Q_{2q}] = 0 \;\;\;\text{if}\;\;L_n = 0,
\label{eq:gnpolynomial}
\end{equation}
where we have defined $\sigma_n$ as the operator that exchanges the rod indices $n$ and $n+1$, i.e., exchanges both the momenta \textit{and} the positions of the rods.
By the factor theorem for polynomials, Eq.~(\ref{eq:gnpolynomial}) implies 
\begin{equation}
    g_n=L_n\cdot U(\{z_i\},\{\bar{z}_i\}),\;\;\;U \in {\mathbb{C}[\{z_i\},\{\bar z_i\}]}.
\label{eq:gnform}
\end{equation}
Since both $Q_{2q}$ and $\sigma_n[Q_{2q}]$ are polynomials of bi-degree $(q,q)$, $g_n$ should also be a homogeneous polynomial of bi-degree $(q,q)$ unless it is zero.
In the following paragraph, we show that this is just the zero polynomial.
It is convenient to define the $\delta$-degree of monomials as 
\begin{equation}
    \delta\left(\prod\nolimits_{j=1}^N{z_j^{r_j}}\prod\nolimits_{j=1}^N{\bar z_j^{s_j}}\right) \defn \sum_j{(r_j - s_j)},
\label{eq:deltadegree}
\end{equation} 
such that $Q_{2q}$ and $g_n$ are $\delta$-homogeneous and $\delta(Q_{2q}) = \delta(g_n) = 0$.
We can then write
\begin{equation}
L_n \defn L^+ + L^-,\;\;\;L^+ \defn z_n-z_{n+1},\;\;\;L^-\defn \bar z_n -\bar z_{n+1},\;\;\;\delta(L^{\pm}) = \pm 1, 
\end{equation}
and decompose \(U\) of Eq.~(\ref{eq:gnform}) into its \(\delta\)-homogeneous components {to obtain}:
\begin{equation}
U=\sum_{t = T_{\rm min}}^{T_{\rm max}} U_t,\;\;\;\delta(U_t) = t\;\;\;\implies\;\;\; g_n=L_n U=\sum_{t = T_{\rm min}}^{T_{\rm max}}\big(L^+U_t+L^-U_t\big),\;\;\delta(L^+U_t)=t+1,\ \delta(L^-U_t)=t-1,
\label{eq:Udecomposition}
\end{equation}
{the existence of $T_{\rm min} \leq T_{\rm max}$ is guaranteed by the fact that $g_n$ has a finite degree $2q$.}
{Now note that} the components of \(g_n\) of \(\delta\)-degree \(T_{\rm min} - 1\) and \(T_{\rm max} + 1\) are exactly \(L^-U_{T_{\rm min}}\) and \(L^+U_{T_{\rm max}}\).
However, since \(\delta(g_n)=0\), this should mean $T_{\rm min} = 1$ and $T_{\rm max} = -1$, which is a contradiction, implying that $U = 0$.
This also shows that $g_n = 0$. 

Hence, conservation under Eqs.~(\ref{eq:freeconserved}) and (\ref{eq:collisionconserved}) implies that
\begin{equation}
    Q_{2q}=\sigma_n[Q_{2q}]\;\;\;\implies\;\;\;Q = \sigma_n[Q],
\label{eq:condition1}
\end{equation}
which completes the proof that $Q$ is symmetric under label exchange. 
\end{proof}

{When $a_n = 0$ for all $n$, this Lemma implies that the conserved quantities are invariant under the permutation group $S^{(x,p)}_N$ of all possible permutations of the $N$ rod indices.
In this case, we obtain the following theorem.}
\begin{theorem}\label{thm:a=0}
When $a_n = 0$ for all $n$, any $Q \in \mathbb{C}[\{z_i \bar z_j\}]$ that satisfies Eq.~(\ref{eq:collisionconserved}) for all $n$ also satisfies
    \begin{equation}
        Q \in \mathbb{C}\left[\left\{\prod_{\alpha=1}^R Z_{m_{\alpha},n_{\alpha}},\;\;\;\sum_{\alpha=1}^Rm_{\alpha}=\sum_{\alpha=1}^Rn_{\alpha}\right\}\right],\;\;\;Z_{m,n}\defn\sum_{j=1}^Nz_j^m\bar{z}_j^n.
    \label{eq:a=0ring}
    \end{equation}
\end{theorem}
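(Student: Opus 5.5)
By Lemma~\ref{lem:a=0} applied to every adjacent pair $n,n+1$, the conserved quantity $Q$ is invariant under every adjacent transposition $\sigma_n$ of the rod labels. These transpositions generate $S_N$, so $Q$ is invariant under the full group $S^{(x,p)}_N$ acting by $(z_i,\bar z_i)\to(z_{\sigma(i)},\bar z_{\sigma(i)})$. By Theorem~\ref{th:freemotion}, $Q$ is also a sum of balanced monomials. The problem therefore reduces to describing the ring of polynomials in the $2N$ variables $\{z_i,\bar z_i\}$ that are both balanced and invariant under the diagonal action of $S_N$ on the pairs $(z_i,\bar z_i)$.

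The plan is to symmetrize monomials. First decompose $Q=\sum_q Q_{2q}$ into bihomogeneous components of bidegree $(q,q)$. Both the $S_N$ action and the balancing condition preserve bidegree, so each $Q_{2q}$ is itself symmetric and balanced. Next write $Q_{2q}=\sum c_{\mathbf r,\mathbf s}M_{\mathbf r,\mathbf s}$. Invariance forces $c_{\mathbf r,\mathbf s}$ to be constant on $S_N$-orbits of the exponent pairs $((r_1,s_1),\dots,(r_N,s_N))$. Hence $Q_{2q}$ is a linear combination of orbit sums (monomial symmetric functions) $m_\lambda=\sum_{\text{orbit}}\prod_i z_i^{r_i}\bar z_i^{s_i}$, where $\lambda$ is the multiset of nonzero pairs $(r_i,s_i)$ and satisfies $\sum r_i=\sum s_i=q$.

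The key step is to show that each such orbit sum is a polynomial in products $Z_{m_1,n_1}\cdots Z_{m_R,n_R}$ with $\sum m_\alpha=\sum n_\alpha$. I would prove this by the standard triangularity argument for multisymmetric power sums, working in characteristic zero. Take $\lambda=\{(m_1,n_1),\dots,(m_R,n_R)\}$ with $R$ nonzero parts, and expand the product $Z_{m_1,n_1}\cdots Z_{m_R,n_R}=\sum_{j_1,\dots,j_R}\prod_\alpha z_{j_\alpha}^{m_\alpha}\bar z_{j_\alpha}^{n_\alpha}$. The terms with all $j_\alpha$ distinct give a positive integer multiple of $m_\lambda$. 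Terms where some indices coincide give orbit sums of coarser multisets, obtained by adding parts together, which have strictly fewer than $R$ parts. Each of these coarser multisets still has total bidegree $(q,q)$ and is therefore still balanced. Inducting on the number of parts then expresses $m_\lambda$ as a rational linear combination of products of $Z$'s, each product having the total bidegree $(q,q)$ of $\lambda$. This is exactly the balance condition $\sum m_\alpha=\sum n_\alpha$ in Eq.~(\ref{eq:a=0ring}). The base case $R=1$ is $m_{\{(m,n)\}}=Z_{m,n}$ with $m=n$. Summing over $q$ then places $Q$ in the stated ring.

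The main obstacle is the bookkeeping in this inductive step. It must be checked that the coarsening map on set partitions of $\{1,\dots,R\}$ only produces multisets with fewer parts, and that the leading coefficient, the order of the stabilizer of $\lambda$, is nonzero, so division is allowed over $\mathbb{C}$. A further subtlety is that individual factors $Z_{m_\alpha,n_\alpha}$ need not be balanced; only the product is. For this reason the generators must be taken to be the balanced products, not the $Z$'s themselves. Since every product arising in the expansion has total bidegree $(q,q)$, this causes no problem. Finally, I would remark that the converse holds: such products are manifestly $U(1)$-invariant and label-symmetric, and therefore conserved.
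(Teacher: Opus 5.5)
Your proposal is correct and follows the paper's route: Lemma~\ref{lem:a=0} on every adjacent pair gives $S^{(x,p)}_N$ invariance, balanced invariant polynomials are then linear combinations of symmetrized balanced monomials, and these are rewritten through the power sums $Z_{m,n}$. Your induction on the number of parts, with leading coefficient equal to the nonzero order of the stabilizer of $\lambda$ and every coarsening preserving the total bidegree $(q,q)$, proves that these orbit sums lie in the ring generated by balanced products of $Z$'s; the paper only asserts this as an ``equivalent description'' and explicitly checks just the converse direction, that such products are conserved.
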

\begin{proof}
Applying Lemma \ref{lem:a=0} for all $n$ shows that $Q$ is invariant under the exchange of any two adjacent indices, which means that under the $S^{(x,p)}_N$ group generated by joint permutations of the positions and momenta $\{x_i,p_i\}\to\{x_{\sigma(i)},p_{\sigma(i)}\}$ for any $\sigma\in S_N$.
We now make the set of conserved quantities explicit using their invariance under this $S^{(x,p)}_N$ symmetry and the $U(1)$ symmetry imposed by free motion.
A convenient way to write all such symmetric conserved quantities is by symmetrizing balanced monomials.
Given any balanced monomial $M_{\vect{r},\vect{s}}(\{z_i\},\{\bar z_i\})$ of the form of Eq.~(\ref{eq:general_z_barz}), we can define its symmetrization as
\begin{equation}
\mathrm{Sym}\, M_{\vect{r},\vect{s}}
:=
\sum_{\sigma\in S_N}
\prod_{j=1}^N z_{\sigma(j)}^{\,r_j}\bar z_{\sigma(j)}^{\,s_j}.
\label{eq:sym_balanced_monomial}
\end{equation}
Each such symmetrization is manifestly $S^{(x,p)}_N$ and $U(1)$ invariant, and hence is a conserved quantity for the point particle dynamics. 
Conversely, every homogeneous polynomial conserved for this point particle dynamics can be written as a linear combination of such symmetrized balanced monomials.
An equivalent and often more compact description is in terms of the $S_N$ symmetric power sums
\begin{equation}
Z_{m,n}:=\sum_{j=1}^N z_j^{\,m}\bar z_j^{\,n}.
\label{eq:pmn_def}
\end{equation}
These are not balanced, hence \(Z_{m,n}\) is not conserved by itself unless \(m=n\), but the algebra generated by their balanced product, i.e., 
\begin{equation}
\left\{\prod_{\alpha=1}^R Z_{m_\alpha,n_\alpha}\right\},\;\;\;\sum_{\alpha=1}^R m_\alpha=\sum_{\alpha=1}^R n_\alpha,
\label{eq:product_pmn}
\end{equation}
is always conserved. 
This completes the proof that $Q$ belongs to the algebra shown in Eq.~(\ref{eq:a=0ring}). 
\end{proof}
This makes explicit that the \(a=0\) case has a much larger family of conserved quantities than just $E$ and $E_{\rm cm}$. 
A few low-degree examples that are conserved in this case are:
\begin{align}
Z_{1,1} &= E = \sum_i |z_i|^2
= \sum_i (x_i^2+p_i^2),\;\;Z_{1,0} Z_{0,1} = E_{\rm cm} = |\sum_i z_i|^2
= (\sum_i x_i)^2+(\sum_i p_i)^2, \nonumber \\
Z_{2,2} &= \sum_i |z_i|^4,\;\;\;Z_{2,0}Z_{0,2} = |\sum_i z_i^2|^2,\;\;\;
Z_{2,1}Z_{0,1} = (\sum_i z_i^2\bar z_i)(\sum_j \bar z_j),
\end{align}
The first line just shows $E$ and $E_{\rm cm}$, whereas the second line shows some quantities that can be proved to be independent of them {using the Jacobian of Eq.~(\ref{eq:jacobian})}.
{Due to its similarity with the system of non-interacting particles} in a harmonic trap, {we expect that this system} is {maximally} superintegrable {as well}.
{Indeed, we can explicitly} construct the $2N-1$ independent conserved quantities for this system.
Recall the
symmetric power sums of Eq.~(\ref{eq:pmn_def}) that are conserved in this system, and construct the following $2N-1$ real polynomials:
\begin{align}
    I_k \defn Z_{k,k},\;\;1 \leq k \leq N,\;\;\;J_k \defn \operatorname{Im}\!\left(
        Z_{k+1,k}Z_{0,1}
    \right),\;\;\;1 \leq k \leq N-1.
    \label{eq:point-particle-IkJk}
\end{align}
To establish their functional independence, we {again use the} polar coordinates of Eq.~(\ref{eq:polarcoordinates}), where we have
\begin{equation}
    I_k=\sum_{j=1}^{N}r_j^{2k},\;\;\;Z_{k+1,k}
    =
    \sum_{j=1}^{N}
    r_j^{2k+1}e^{i\theta_j},
    \;\;\;
    Z_{0,1}
    =
    \sum_{j=1}^{N}
    r_j e^{-i\theta_j}\;\;\implies\;\;J_k = \sum_{i,j = 1}^N{r_i^{2k+1} r_j \sin(\theta_i - \theta_j).}
\end{equation}
Hence the Jacobian {in polar coordinates has a block lower triangular structure and the following inequality:}
\begin{equation}
    \mathcal{J} = \begin{pmatrix}
        V & 0\\
        U & W
    \end{pmatrix}\;\;\;\implies\;\;\text{rank}(\mathcal{J}) \geq \text{rank}(V) + \rank(W).
\label{eq:jacobianstructure}
\end{equation}
{Here $V$ is an $N \times N$ matrix corresponding to $\{I_k\}$ and $\{r_j\}$, and $W$ is an $(N-1) \times N$ matrix corresponding to $\{J_k\}$ and $\{\theta_j\}$, and $U$ is an $(N-1) \times N$ matrix corresponding to $\{J_k\}$ and $\{r_j\}$.}
$V$ {can be directly be computed to be} a Vandermonde-type matrix with {with elements $V_{kj}= \frac{\partial I_k}{\partial r_j} = 2k r_j^{2k-1}$, whose} determinant {is generically non-zero, which implies that $\text{rank}(V) = N$.}
For $W$, we {can compute its} elements {to be}
\begin{equation}
    W_{kj} = \frac{\partial J_k}{\partial\theta_j}
    =
    r_j
    \left(
        A_{0j} r_j^{2k}-A_{kj}
    \right),\;\;\;A_{kj}=\sum_{\ell=1}^{N}{r_\ell^{2k+1}{\cos(\theta_\ell - \theta_j)}},
    \qquad 0 \leq k \leq N-1.
    \label{eq:J-angle-derivative}
\end{equation}
We now show that $\text{rank}(W) = N - 1$ by showing that $\rank(W_0) = N - 1$, where $W_0$ is $W$ evaluated at $\theta_1 = \cdots \theta_N = 0$.
At this point $A_{kj}$ becomes independent of $\{\theta_i\}$.
If $\operatorname{rank}(W_0) < N-1$, {there should exist a} linear combination of its rows {that} vanishes, i.e., for every $j$, we should have
\begin{equation}
    \sum_{k=1}^{N-1}
    c_k(\{r_l\})
    \left(A_{0j} r_j^{2k}-A_{kj}\right)=0,
\label{eq:lincomb}
\end{equation}
where $c_k$ can be a function of $\{r_\ell\}$.
We can now define a polynomial in a continuous variable $X$:
\begin{equation}
P(X) = A_0 \sum_{k=1}^{N-1} c_k(\{r_l\}) X^k - \sum_{k=1}^{N-1} c_k(\{r_l\}) A_k, 
\end{equation}
which should satisfy $P(r_j^2) = 0$ for all $N$ values of $j$, {where $\{r_j^2\}$ can all take $N$ distinct positive values}.
However, {since} the polynomial $P(X)$ has a degree of $N-1$, {it cannot have $N$ distinct roots unless} it is identically the zero polynomial ($P(X) \equiv 0$), {with which we can conclude $c_k = 0$ for all $k$, i.e., no non-trivial linear combination that satisfies Eq.~(\ref{eq:lincomb}) exists.}
{Hence} the rows of $W_0$ are linearly independent, and we {can} conclude {that}:
\begin{equation}
\operatorname{rank}(W_{0}) = N-1 {= \text{rank}(W)},
\end{equation}
{where the last equality follows because $N-1$ is the maximum possible rank of $W$.
With this and Eq.~(\ref{eq:jacobianstructure}), we get that $\text{rank}(\mathcal{J}) = 2N - 1$, which is its maximum possible rank. 
}
$\{I_k\}$ and $\{J_k\}$ are therefore $(2N-1)$ functionally independent conserved quantities, making the system maximally superintegrable.
\section{Absence of additional analytic conserved quantities for Stochastic Momentum Exchange Dynamics (SMED)}\label{sec:SMED}
We now consider a slightly simpler process considered in~\cite{bagchi}, termed the Stochastic Momentum Exchange Dynamics (SMED), where the adjacent rods randomly exchange momenta at any point in time in addition to following their deterministic dynamics (i.e., free motion and collisions).
Any conserved quantity $Q$ under such a process should be conserved under free motion, i.e., satisfy Eq.~(\ref{eq:freeconserved}) and should be invariant under exchanges of momenta of adjacent pairs {for arbitrary positions (not just at the time of collision)}, and thus under \textit{arbitrary} permutations of the $N$-momenta $\{p_1,p_2,...,p_N\}$, i.e., 
\begin{equation}
Q(\{x_1,\dots,x_N\},\{p_1,\dots,p_N\})
=
Q(\{x_1,\dots,x_N\}, \{p_{\sigma(1)},\dots,p_{\sigma(N)}\})
\qquad \forall \sigma\in S_N.
\label{eq:SMEDconserved}
\end{equation}
We say that such functions are invariant under the group $S^{(p)}_N$.
Note that the collision condition of Eq.~(\ref{eq:collisionconserved}) is automatically satisfied by any $Q$ that satisfies Eq.~(\ref{eq:SMEDconserved}), hence we need not consider it separately, and the rod lengths just do not enter the analysis here.
We now show that given $Q$ is conserved under free motion, i.e., $Q \in \mathbb{C}[\{z_i \bar z_j\}]$ the only independent conserved quantities in the SMED are $E$ and $E_{\rm cm}$ of Eq.~(\ref{eq:traditionalconserved}) or (\ref{eq:EEcmdefn}).

\begin{theorem}\label{thm:Pexchange}
{Any $Q \in \mathbb{C}[\{C_{ij}^+, C_{ij}^-\}] = \mathbb{C}[\{z_i \bar z_j\}]$ that is symmetric under arbitrary exchange of momenta, i.e., satisfies Eq.~(\ref{eq:SMEDconserved}), is also invariant under arbitrary exchange of positions.
Furthermore, it is functionally dependent on $E$ and $E_{\rm cm}$, i.e., $Q \in \mathbb{C}[E, E_{\rm cm}]$.}
\end{theorem}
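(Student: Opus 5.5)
The plan is to combine the $U(1)$ invariance of Theorem~\ref{th:freemotion} with the $S_N^{(p)}$ invariance of Eq.~(\ref{eq:SMEDconserved}), and to show that together they force invariance under the full orthogonal group $O(N)$ acting diagonally on the vectors $\mathbf{x}=(x_1,\dots,x_N)$ and $\mathbf{p}=(p_1,\dots,p_N)$, restricted to the complement of $(1,\dots,1)$. Invariant theory of two vectors then leaves only $E$ and $E_{\rm cm}$.

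\textbf{Step 1: invariance under position permutations.} Write the $U(1)$ rotation by angle $\theta$ as $R_\theta$ and the momentum permutation by $\sigma$ as $\Pi^{(p)}_\sigma$. Since $Q$ is $U(1)$ invariant, $Q\circ R_{-\pi/2}\circ\Pi^{(p)}_\sigma\circ R_{\pi/2}=Q$. The map $R_{\pi/2}$ sends $(x_i,p_i)\to(p_i,-x_i)$, so conjugating a momentum permutation by it yields a position permutation $\Pi^{(x)}_\sigma$. Hence $Q$ is $S_N^{(x)}$ invariant, which is the first claim.

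\textbf{Step 2: reduction to a small set of invariants.} The group generated by $R_\theta$, $S_N^{(x)}$ and $S_N^{(p)}$ is large. Conjugating $\Pi^{(p)}_\sigma$ by $R_\theta$ for generic $\theta$ gives the linear map acting on each pair $(x_i,p_i)$ mixed across indices by $\cos^2\theta\,\mathbb{1}+\sin^2\theta\,P_\sigma$ and related blocks. I would compute the Lie algebra generated by these maps, differentiating in $\theta$ and taking commutators. The aim is to show that it contains $\mathfrak{so}(N-1)$ acting identically on $\mathbf{x}$ and $\mathbf{p}$ on the subspace orthogonal to $(1,\dots,1)$, and that it also mixes $\mathbf{x}$ and $\mathbf{p}$ in the $\mathfrak{so}(2)$ sense. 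An easier alternative is to work directly with polynomials: decompose $\mathbf{x}=\bar x\,\mathbf{1}+\mathbf{x}_\perp$ and similarly for $\mathbf{p}$. By the first fundamental theorem for $O(N-1)$ (Weyl), a polynomial jointly invariant under $O(N-1)$ on $(\mathbf{x}_\perp,\mathbf{p}_\perp)$ and under $U(1)$ is a polynomial in $\bar x,\bar p,\ \mathbf{x}_\perp\cdot\mathbf{x}_\perp,\ \mathbf{p}_\perp\cdot\mathbf{p}_\perp,\ \mathbf{x}_\perp\cdot\mathbf{p}_\perp$. It remains to exclude everything except $\bar x^2+\bar p^2$ and $|\mathbf{x}_\perp|^2+|\mathbf{p}_\perp|^2$. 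Invariance under independent permutations of $\mathbf{p}$, with $\mathbf{x}$ fixed, kills any dependence on $\mathbf{x}_\perp\cdot\mathbf{p}_\perp$ and on combinations distinguishing $|\mathbf{x}_\perp|^2$ from $|\mathbf{p}_\perp|^2$ beyond what $U(1)$ allows. The surviving $U(1)$-invariant pieces are exactly $E_{\rm cm}/N$ and $E-E_{\rm cm}/N$.

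\textbf{Main obstacle.} The hard step is upgrading discrete $S_N^{(x)}\times S_N^{(p)}$ invariance to continuous $O(N-1)$ invariance, because a polynomial can be $S_N$ invariant without being $O(N-1)$ invariant. I would try to obtain this by a degree-by-degree argument. Work with a homogeneous bidegree-$(q,q)$ component $Q_{2q}$ in $z,\bar z$. Expanding $Q_{2q}(R_\theta\,\Pi^{(p)}_\sigma R_{-\theta}\,\cdot)=Q_{2q}$ as a trigonometric polynomial in $\theta$ and equating Fourier coefficients gives linear constraints. A cleaner route is to take $\sigma$ a transposition $(ij)$ and use $R_{\pi/4}$-type conjugations, which produce reflections in the plane $x_i-x_j$ mixed with $p_i-p_j$. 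Products of these generate rotations by angles that are irrational multiples of $\pi$ in planes spanned by differences $e_i-e_j$. These angles are dense, so the closure of the generated group, under which polynomials stay invariant by continuity, contains $SO(N-1)$ on the difference subspace. Verifying that some such product has an angle that is not a rational multiple of $\pi$ is the concrete computation I expect to be the crux. Once this is in hand, Step 2 finishes the proof that $Q\in\mathbb{C}[E,E_{\rm cm}]$.
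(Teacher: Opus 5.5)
Your Step 1 is the paper's argument (conjugating $S_N^{(p)}$ by the $\theta=\pi/2$ rotation), and your overall target is right. The gap is at the step you yourself call the crux: nothing in the proposal establishes invariance under a continuous orthogonal group on the relative coordinates.

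The ``easier alternative'' in Step 2 assumes diagonal $O(N-1)$ invariance, so it is circular until that invariance is proven. Your sketch for proving it is left undone. It relies on transpositions conjugated by $R_{\pi/4}$, on a product whose rotation angle is an irrational multiple of $\pi$ (not yet exhibited), on density, and on an unspecified argument that the closure contains $SO(N-1)$.

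Even granting $O(N-1)$, the claim that discrete $S_N^{(p)}$ invariance ``kills'' dependence on $\mathbf{x}_\perp\cdot\mathbf{p}_\perp$ is asserted without justification. Here $M_\sigma$ denotes the action of $\sigma$ on $\mathbf{1}^\perp$. You would need, for instance, that $\mathbf{x}_\perp\cdot\mathbf{p}_\perp$ and $\mathbf{x}_\perp\cdot M_\sigma\mathbf{p}_\perp$ vary independently over an open set, or a separate $O(N-1)$ acting on $\mathbf{p}_\perp$ alone.

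The missing idea is that no density argument is needed, because the $U(1)$ is already continuous. Conjugate the rotation flow by permutations, rather than permutations by isolated rotations. In the basis $(x_1,\dots,x_N,p_1,\dots,p_N)$, take $J_I=\begin{pmatrix}0&-I\\ I&0\end{pmatrix}$ and $\Pi_{\Gamma,\Delta}=\mathrm{diag}(\Gamma,\Delta)\in S_N^{(x)}\times S_N^{(p)}$. Then $\Pi_{\Gamma,\Delta}e^{\theta J_I}\Pi_{\Gamma,\Delta}^T=e^{\theta J_\Sigma}$, with $J_\Sigma=\begin{pmatrix}0&-\Sigma\\ \Sigma^T&0\end{pmatrix}$ and $\Sigma=\Gamma\Delta^T$. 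So every $J_\Sigma$ generates a one-parameter symmetry of $Q$.

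$J_\Sigma$ is linear in $\Sigma$, and permutation matrices restricted to $\mathbf{1}^\perp$ span all of $M_{N-1}(\mathbb{R})$ (their span has dimension $1+(N-1)^2$). Hence the span of the $J_\Sigma$ is $\mathfrak{so}(2)$ on the center-of-mass plane plus all blocks $\begin{pmatrix}0&-M\\ M^T&0\end{pmatrix}$ on the relative space. Brackets of these blocks supply the block-diagonal skew part, giving $\mathfrak{so}(2)\oplus\mathfrak{so}(2N-2)$.

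The relative coordinates then form a single vector under $SO(2N-2)$, whose only invariant is its norm. So $Q\in\mathbb{C}[E_{\rm cm}/N,\,E-E_{\rm cm}/N]=\mathbb{C}[E,E_{\rm cm}]$ follows at once, with no Gram-matrix bookkeeping. Your passing remark about differentiating in $\theta$ and taking commutators was the right route. That is the computation to carry out, in place of the density argument.
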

\begin{proof}
Given that $Q \in \mathbb{C}[\{z_i \bar z_j\}]$, it is $U(1)$ symmetric under the transformations {[see Eqs.~(\ref{eq:z_eom}) and (\ref{eq:U1phasespace})]}
\begin{equation}
    (z_j, \bar z_j) \mapsto (e^{-it} z_j, e^{it} \bar z_j)\;\;\;\forall\;\;j\;\;\;\implies\;\;\;(x_j, p_j) \mapsto (x_j \cos t + p_j \sin t, -x_j\sin t + p_j\cos t)\;\;\;\forall\;\;j,  
\label{eq:U1transform}
\end{equation}
where $t$ can also be interpreted as ``time'' under the free-motion equations of motion, but we will not use that interpretation here.
First, we notice that we can use \(t=\pi/2\) in Eq.~(\ref{eq:U1transform}) to show that any $U(1)$ invariant $Q$ that is $S^{(p)}_N$ invariant should also be $S^{(x)}_N$ invariant under the permutations of the positions $\{x_j\}$, since that gives $(x_i,p_i)\mapsto (p_i,-x_i)$, which should mean that 
\begin{equation}
Q(\{x_{\sigma(i)}\},\{p_i\}) = Q(\{p_i\},\{-x_{\sigma(i)}\})= Q(\{p_{i}\},\{-x_i\}) = Q(\{x_i\},\{p_i\})\;\;\;\forall\;\;\sigma \in S_N. 
\label{eq:SpSxinvariance}
\end{equation}
Thus $Q$ is symmetric in the $\{x_i\}$ and $\{p_i\}$ separately, or equivalently it is invariant under the group $S_N^{(x)} \times S_N^{(p)}$, and a $U(1)$ group that interacts non-trivially with the permutations.
We consider the action of this symmetry group on the $2N$-dimensional space spanned by $\{x_i\}$ and $\{p_i\}$, which are the generators of the complete algebra of quantities under consideration.
As we show, understanding how this space of generators transform under the symmetry will allow us to construct conserved quantities invariant under this symmetry. 
In particular, in this space, we show that the action of the symmetry splits irreducibly into a $2$-dimensional representation of an $SO(2)$ and an $(2N-2)$-dimensional representation of an $SO(2N-2)$ group, whose invariants are well-known. 
To do so, we start with the $2N$-dimensional matrix representation of the infinitesimal generator of rotations of Eq.~(\ref{eq:U1transform}), which is given by
\begin{equation}
    J_I = \begin{pmatrix}
        0 & -I \\
        I & 0
    \end{pmatrix},
\label{eq:U1gen}
\end{equation}
where we have ordered the basis elements as $\{x_1, \cdots, x_N, p_1, \cdots, p_N\}$ and $I$ is the $N$-dimensional identity matrix.
In the same basis ordering, the elements of the discrete permutation group $S_N^{(x)} \times S_N^{(p)}$ are the block matrices 
\begin{equation}
    \Pi_{\Gamma, \Delta} = \begin{pmatrix} \Gamma & 0 \\ 0 & \Delta \end{pmatrix},
\label{eq:SNgen}
\end{equation}
where $\Gamma$ and $\Delta$ are independent $N$-dimensional permutation matrices (hence also orthogonal matrices).
This means that additional infinitesimal generators can be obtained by conjugating $J_I$ with each $\Pi_{\Gamma,\Delta}$, i.e., 
\begin{equation}
    J_{\Sigma} \defn \begin{pmatrix} 0 & -\Sigma \\ \Sigma^T & 0 \end{pmatrix} = \Pi_{\Gamma,\Delta} J_I \Pi_{\Gamma,\Delta}^{T},\;\;\;\;\Sigma \defn \Gamma \Delta^T,
\label{eq:JSigmadefn}
\end{equation}
where $\Sigma$ is an $N$-dimensional permutation matrix, just as $\Gamma$ and $\Delta$.
Hence the full Lie group is generated by all the infinitesimal generators $\{J_{\Sigma}\}$ for all $N!$ permutation matrices $\Sigma$.
We now characterize the real Lie algebra generated by these matrices.
We first notice that the $N$-dimensional uniform vector $\mathbf{1} = (1, \dots, 1)^T$ is an eigenvector of all $N$-dimensional permutation matrices $\Sigma$.
Hence, we can decompose any permutation matrix as in this one-dimensional space and its $(N-1)$-dimensional orthogonal complement as
\begin{equation}
    \Sigma = 1 \oplus M_{\Sigma} \;\;\;\text{where}\;\;\;M_\Sigma \in M_{N-1}(\mathbb{R}), 
\label{eq:Sigmadecomp}
\end{equation}
where $M_{k}(\mathbb{R})$ is the space of all $k$-dimensional real matrices.
Further, using the Birkhoff-von Neumann theorem, we can show that the linear span of permutation matrices are doubly stochastic matrices up to an overall rescaling and additive factor of $\mathbf{1}\mathbf{1}^T$ (see, e.g., proposition 3.1 of~\cite{Anderson2021}), i.e., it has the form:
\begin{equation}
    \mathcal{S} \defn \text{span}_{\mathbb{R}}\{\Sigma\} = \{A \in M_N(\mathbb{R}):\;\;A\mathbf{1} = \lambda\mathbf{1},\;\;\mathbf{1}^T A = \lambda\mathbf{1}^T\}, 
\label{eq:permspan}
\end{equation}
and we can directly count its dimension to be $\text{dim}(\mathcal{S}) = 1 + (N-1)^2$. 
This is exactly the dimension of block-diagonal matrices of the form in Eq.~(\ref{eq:Sigmadecomp}), so we can also conclude that 
\begin{equation}
    \text{span}_{\mathbb{R}}\{M_\Sigma\} = M_{N-1}(\mathbb{R}). 
\label{eq:MSigmaspan}
\end{equation}
This leads to a decomposition of $J_\Sigma$ into {two parts: (i)} The 2-dimensional component spanned by [in the basis of Eq.~(\ref{eq:JSigmadefn})] the $2N$-dimensional vectors $(\mathbf{1}, \mathbf{0})^T$ and $(\mathbf{0}, \mathbf{1})^T$ where $\mathbf{0}$ is the $N$-dimensional zero vector, and {(ii)} its $(2N-2)$-dimensional component is its orthogonal complement, as
\begin{equation}
     J_\Sigma = \begin{pmatrix}
    0 & -1 \\
    1 & 0 
    \end{pmatrix} \oplus \begin{pmatrix}
    0 & -M_{\Sigma} \\
    M_{\Sigma}^T & 0 
    \end{pmatrix}. 
\label{eq:JSigmadecomp}
\end{equation}
Taking commutators of $\{J_\Sigma\}$ among themselves and their linear span, and using Eq.~(\ref{eq:MSigmaspan}) shows that the real Lie algebra generated by $\{J_\Sigma\}$ is simply 
\begin{equation}
    \text{Lie}_{\mathbb{R}}(\{J_\Sigma\}) = \mathfrak{so}(2) \oplus \mathfrak{so}(2N-2),
\end{equation}
where the $\mathfrak{so}(k)$ is the Lie algebra of all $k$-dimensional skew-symmetric matrices.
Hence the full symmetry group contains the group $SO(2) \times SO(2N-2)$, and we now use this information to construct invariant functions. 

In the language of $\{x_i\}$ and $\{p_i\}$, the two-dimensional subspace invariant under the full symmetry group is spanned by the normalized center-of-mass coordinates $\{x_{\rm cm}/\sqrt{N}, p_{\rm cm}/\sqrt{N}\}$, where $x_{\rm cm} := \sum_{j=1}^N x_j$ and $p_{\rm cm} := \sum_{j=1}^N p_j$. Its $(2N-2)$-dimensional orthogonal complement can always be spanned by choosing $N-1$ "relative" coordinates $\{x'_1, \dots, x'_{N-1}\}$ and their conjugate momenta $\{p'_1, \dots, p'_{N-1}\}$ such that the full set of $2N$ variables forms an orthonormal basis. 
The symmetry group acts as $SO(2) \cong U(1)$ on the center-of-mass coordinates and as $SO(2N - 2)$ on the relative coordinates.
Any vector space of $n$ independent variables acted upon transitively by $SO(n)$ possesses a single independent quadratic invariant corresponding to its quadratic norm~\cite{weyl1997}.
Applying this fundamental theorem of invariant theory to the two decoupled sectors, we find that the only independent invariants are $I_0$ and $I_1$, defined as:
\begin{equation}
I_0 = \frac{1}{N}\left(x_{\rm cm}^2 + p_{\rm cm}^2\right), \quad I_1 := \sum_{i=1}^{N-1} \left[(x'_i)^2 + (p'_i)^2\right].
\end{equation}
It is clear that these are related to $E$ and $E_{\rm cm}$
\begin{equation}
    I_0 = \frac{E_{\rm cm}}{N},\;\;\;I_0 + I_1 = \sum_{j=1}^N{(x_j^2 + p_j^2)} = E,
\end{equation}
where in the second equation we have used the fact that the basis change {from $\{x_i\}$ and $\{p_i\}$ } to center-of-mass and relative coordinates {$\{x_{\rm cm}, p_{\rm cm}, \{x_i'\}, \{p_i'\}\}$} is an orthogonal transformation. 
Hence the full algebra of conserved quantities can also just be expressed as $\mathbb{C}[E, E_{\rm cm}]$.
Further, it is easy to see that $E$ and $E_{\rm cm}$ are evidently invariant under $S^{(x)}_N$ and $S^{(p)}_N$, hence this is the complete set of conserved quantities, which completes the proof. 
\end{proof}
\section{Absence of additional analytic conserved quantities with collisions between rods of non-zero length}
\label{sec:absenceofthird}
We now apply the results of Secs.~\ref{sec:freemotionU1}, \ref{sec:a=0}, and \ref{sec:SMED} to show that when {a single rod has a non-zero length, i.e., $a_n \neq 0$ for some $n$,} any conserved $Q$ during free motion and collisions is functionally dependent on $E$ and $E_{\rm cm}$.
We start with the fact that any $Q$ that is conserved during free motion has to be balanced (i.e., in the algebra $\mathbb{C}[\{z_i \bar z_j\}]$), and first show the following Lemma. 
\begin{lemma}
     \label{lem:collisions}
When the lengths of {either of} the rods $n$ or $n+1$ is non-zero, i.e., if \(a_n\neq 0\) or $a_{n+1} \neq 0$, {any} $U(1)$ invariant $Q\in \mathbb{C}[\{C_{ij}^+, C_{ij}^-\}] = \mathbb{C}[\{z_i \bar z_j\}]$ that satisfies Eq.~(\ref{eq:collisioncondition}) should be invariant under the exchange of the momenta of the rods $n$ and $n+1$.
\end{lemma}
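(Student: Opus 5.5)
The plan is to turn the collision condition, which holds only on the hyperplane $x_{n+1}-x_n=b_n$ with $b_n\neq 0$, into an identity valid for all $x$, by combining it with the $U(1)$ invariance from Theorem~\ref{th:freemotion}. As in the proof of Lemma~\ref{lem:a=0}, I would expand $Q=\sum_q Q_{2q}$ into balanced homogeneous components of bidegree $(q,q)$. I would then define the antisymmetric defects $D_{2q}\defn Q_{2q}-\mathcal P_n[Q_{2q}]$; the goal is $D_{2q}\equiv 0$ for every $q$.

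Next I would pass to the coordinates $u=x_n+x_{n+1}$ and $v=x_{n+1}-x_n$, keeping all other $x_i$ and all $p_i$. Expanding in $v$ gives $D_{2q}=\sum_k v^k D_{2q,k}$, where $D_{2q,k}$ is homogeneous of degree $2q-k$ in the remaining variables. Eq.~(\ref{eq:collisioncondition}) says $\sum_{q,k} b_n^k D_{2q,k}=0$. Since $b_n$ is a fixed nonzero number, I would sort this by the degree $d=2q-k$ in the remaining variables and obtain
\[
\sum_{q:\,2q\ge d} b_n^{\,2q-d}\,D_{2q,2q-d}=0 \quad\text{for every } d .
\]
This is the key difference from the $a=0$ case: different $q$ mix. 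For polynomial $Q$ of top degree $2q_{\max}$ this triangular system can be solved by descending induction on $q$. The equations with $d=2q_{\max}$ and $d=2q_{\max}-1$ isolate $D_{2q_{\max},0}=0$ and $D_{2q_{\max},1}=0$. So $D_{2q_{\max}}$ vanishes to second order on $v=0$, i.e.\ $D_{2q_{\max}}=v^2 W$ by the factor theorem.

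The main step, and where I expect the difficulty, is to upgrade this to $D_{2q_{\max}}\equiv 0$. Once that is done, the top degree drops out of all the equations and the induction continues with $q_{\max}-1$. For this upgrade I would mimic the $\delta$-degree argument of Lemma~\ref{lem:a=0}, writing $v=\tfrac12\big[(z_{n+1}-z_n)+(\bar z_{n+1}-\bar z_n)\big]=\tfrac12(L^+ + L^-)$ with $\delta(L^\pm)=\mp 1$. The complication is that $\mathcal P_n[Q_{2q}]$ is not balanced, so $D_{2q}$ is not $\delta$-homogeneous. I would handle this with a second use of $U(1)$. Rotating by $\theta=\pi/2$ maps $(x,p)\to(p,-x)$, so the same analysis applied to $\tilde Q(\{x\},\{p\})=Q(\{p\},\{-x\})=Q$ gives corresponding statements with the roles of $x$ and $p$ interchanged. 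Combining $\mathcal P_n$ with this $x\leftrightarrow p$ version, and using $\mathcal P_n^2=1$, I would aim to show that the symmetric part $Q_{2q}$ and the defect $D_{2q}$ each decompose into pieces whose extremal $\delta$-components must vanish. The mechanism is the same as in Eq.~(\ref{eq:Udecomposition}): multiplication by $L^\pm$ shifts the $\delta$-degree, and a factor $v^2$ cannot be balanced by the restricted $\delta$-spectrum allowed by $Q_{2q}$. If this direct route fails, my fallback is to vary the time of collision. Applying the full $U(1)$ orbit to $Q$ and noting that $\mathcal P_n$ commutes with the permutation-like structure gives the condition on every rotated hyperplane $x_{n+1}\cos\theta+p_{n+1}\sin\theta-(x_n\cos\theta+p_n\sin\theta)=b_n$. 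This should supply enough vanishing conditions to force $D_{2q}=0$.

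Finally, for $Q$ that is only analytic at the origin, there is no top degree, so the descending induction has nowhere to start. Here I would use the scaling $Q_\lambda(y)=Q(\lambda y)$, which converts the condition at $b_n$ into one at $b_n/\lambda$, and truncate in $\lambda$ order by order. I regard this extension, together with the $v^2$-upgrade step above, as the genuinely nontrivial parts of the proof.
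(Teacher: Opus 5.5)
\textbf{There is a genuine gap at the step you yourself flag as the main difficulty.}

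Your reduction is sound up to that point. Sorting the collision identity by degree in the variables other than $x_{n+1}-x_n$ does show that the top-degree defect $D_{2q_{\max}}$ is divisible by $(x_{n+1}-x_n)^2$. The claim you then need is also true: a balanced $Q_{2q}$ whose defect is divisible by $(x_{n+1}-x_n)^2$ must be $\mathcal{P}_n$-symmetric. But neither mechanism you propose proves it.
\begin{itemize}
\item The extremal-component argument of Lemma~\ref{lem:a=0} works only because $g_n$ there is $\delta$-homogeneous of degree $0$. Your $D_{2q}$ is not. With $u_n=z_{n+1}-z_n$, the map $\mathcal{P}_n$ is the swap $u_n\leftrightarrow\bar u_n$. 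It sends a balanced monomial $u_n^a\bar u_n^b(\cdots)$ to one of $\delta$-degree $2(b-a)$. So $D_{2q}$ typically has components in many even $\delta$-degrees, and the extreme components of $(x_{n+1}-x_n)^2W$ give no contradiction.
\item Rotating by $\pi/2$ only conjugates $\mathcal{P}_n$ into the position exchange, which breaks balance in exactly the same way.
\item Your fallback rests on a false premise. $\mathcal{P}_n$ does not commute with the $U(1)$ rotations: conjugating by the rotation through $\theta$ turns it into $u_n\mapsto e^{-2\mathrm{i}\theta}\bar u_n$. So $U(1)$ invariance merely transports the collision condition to an equivalent statement about a conjugated involution on the rotated hyperplane. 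It yields no new vanishing conditions for $Q-\mathcal{P}_n[Q]$.
\end{itemize}

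The missing idea is a grading preserved \emph{both} by $\mathcal{P}_n$ and by the inhomogeneous restriction $x_{n+1}-x_n=b_n$; this is what the paper uses. Work in the variables $u_n,\bar u_n$ and $v_n=z_{n+1}+z_n$, $\bar v_n$. The collision surface becomes $u_n+\bar u_n=2b_n$. Balance lets you write
\[
Q=\sum_{k>0}u_n^kG_k+G_0+\sum_{k<0}\bar u_n^{-k}G_k ,
\]
where $G_k$ depends on $u_n\bar u_n$ and on the remaining variables $v_n,\bar v_n,\{z_j,\bar z_j\}_{j\neq n,n+1}$, and has $\delta$-degree $-k$ in those remaining variables. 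Since $u_n,\bar u_n$ carry none of this degree, neither the swap nor the substitution $\bar u_n=2b_n-u_n$ mixes different $k$. The collision condition therefore splits into
\[
[u_n^k-(2b_n-u_n)^k]\,G_k(u_n(2b_n-u_n),\cdots)=0 \quad\text{for each } k\neq0 .
\]
For $b_n\neq0$ the bracket is a nonzero polynomial in $u_n$, which forces $G_k=0$. What remains is $Q=G_0$, manifestly $\mathcal{P}_n$-invariant.

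This settles the lemma in one step, with no induction on degree, so your concern about power series lacking a top degree never arises. The same grading would also close your $v^2$ step: vanishing on $x_{n+1}=x_n$ kills the odd-$k$ sectors, and vanishing of the first derivative in $x_{n+1}-x_n$ there kills the even $k\neq0$ sectors. Once you have this grading, though, the descending induction is unnecessary.
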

\begin{proof}
{We first note that the condition that $a_n \neq 0$ or $a_{n+1} \neq 0$ implies that $b_n \neq 0$, where $b_n$ is defined in Eq.~(\ref{eq:collision}).}
Since $Q \in \mathbb{C}[\{z_i \bar z_j\}]$ by virtue of conservation under free motion, $Q$ should be balanced in the complex variables $\{z_n\}$ and $\{\bar{z}_n\}$. 
We impose this balancing condition to Eq.~(\ref{eq:collisioncondition}) by first defining complex variables as follows
\begin{equation}
    u_n \defn z_{n+1} - z_n,\;\;\;\bar u_n \defn \bar z_{n+1} - \bar z_n,\;\;\;v_n \defn z_{n+1} + z_n,\;\;\;\bar v_n \defn \bar z_{n+1} + \bar z_n,
\end{equation}
which satisfy the conditions
\begin{equation}
    \mathcal{P}_n[u_n] = \bar{u}_n,\;\;\;\mathcal{P}_n[\bar u_n] = u_n,\;\;\;\mathcal{P}_n[v_n] = v_n,\;\;\;\mathcal{P}_n[\bar v_n] = \bar{v}_n,
\label{eq:Pnaction}
\end{equation}
where $\mathcal{P}_n$ is the momentum exchange operator defined in Eq.~(\ref{eq:superopdefn}).
Working in terms of the variables $\{z_{j \neq n, n+1}\}$, $\{\bar{z}_{j \neq n, n+1}\}$, $u_n$, $\bar{u}_n$, $v_n$, $\bar{v}_n$, the collision condition of Eq.~(\ref{eq:collisioncondition}) reads
\begin{equation}
    Q(\{\cdots, u_n, \bar u_n, \cdots \}) = Q(\{\cdots, \bar u_n, u_n, \cdots \})\;\;\;\text{if}\;\;\;u_n + \bar u_n = 2b_n\;\;\;\implies\;\;\;Q(\{\cdots, u_n, 2b_n - u_n, \cdots \}) = Q(\{\cdots, 2b_n - u_n, u_n, \cdots \}),
\label{eq:Qcondition}
\end{equation}
where $\cdots$ denotes the rest of variables apart from $u_n$ and $\bar u_n$. 
Using the $\delta$-degree of Eq.~(\ref{eq:deltadegree}), we have that $\delta(u_n) = \delta(v_n) = 1$ and $\delta(\bar{u}_n) = \delta(\bar{v}_n) = -1$, and we can always separate the dependence of $Q$ on $u_n$ and $\bar u_n$ as:
\begin{align}
    Q &= \sum_{k > 0, \beta}{u_n^k (u_n \bar u_n)^\beta F_{k,\beta}(\cdots)} + \sum_{\beta}{(u_n \bar u_n)^\beta F_{0,\beta}(\cdots)} + \sum_{k < 0, \beta}{\bar u_n^{-k} (u_n \bar u_n)^\beta F_{k,\beta}(\cdots)}\nn \\
    &= \sum_{k > 0}{u_n^k G_k(u_n \bar u_n, \cdots)} + G_0(u_n \bar u_n, \cdots) + \sum_{k < 0}{\bar u_n^{-k} G_k(u_n \bar u_n, \cdots)},\;\;\;G_k(u_n \bar u_n, \cdots) \defn \sum_\beta{(u_n \bar u_n)^\beta F_{k,\beta}(\cdots)},   
\label{eq:Qdeltaexpansion}
\end{align}
where the first line is just a Taylor expansion with $F_{k,\beta}$ defined such that it does not depend on $u_n$ and $\bar u_n$ and $\delta(F_{k,\beta}) = -k$ since $Q$ is balanced, and in the second line $G_k$ depends on the combination $u_n \bar u_n$ and the rest of the variables and $\delta(G_k) = -k$.
Imposing the condition of Eq.~(\ref{eq:Qcondition}) gives
\begin{equation}
    \sum_{k > 0}{[u_n^k - (2b_n - u_n)^k] G_k(u_n (2b_n - u_n), \cdots)} + \sum_{k < 0}{[(2b_n - u_n)^{-k} - u_n^{-k}] G_k(u_n (2b_n - u_n), \cdots)} = 0.
\label{eq:Qdeltaclear}
\end{equation}
Now note that the terms corresponding to different $k$ are linearly independent, since they have different $\delta$-degrees with respect to the remaining variables in $\cdots$, which are $v_n$, $\bar{v}_n$, $\{z_{j \neq n, n+1}\}$, $\{\bar{z}_{j \neq n, n+1}\}$.  
In particular, we can define a $\delta'$-degree by setting $\delta'(u_n) = \delta'(\bar{u}_n) = 0$, $\delta'(v_n) = \delta'(z_{j \neq n, n+1}) = 1$, and $\delta'(\bar{v}_n) = \delta'(\bar{z}_{j \neq n, n+1}) = -1$, and from the definitions, we get that $\delta'(G_k) = \delta'(F_{k,\beta}) = \delta(F_{k,\beta}) = -k$, which means that $\delta'([u_n^k - (2b_n - u_n)^k]G_{k}) = \delta'([(2b_n - u_n)^k - u_n^k]G_{k})  = -k$.
Hence from Eq.~(\ref{eq:Qdeltaclear}) we get that
\begin{align}
    &u^k_n G_k(u_n (2b_n - u_n), \cdots) = (2b_n - u_n)^{k} G_k(u_n (2b_n - u_n), \cdots)\;\;\text{if}\;\;k > 0, \nn \\
    &(2b_n - u_n)^{-k} G_k(u_n (2b_n - u_n), \cdots) = u^{-k}_n G_k(u_n (2b_n - u_n))\;\;\text{if}\;\;k < 0. 
\end{align}
Given that $b_n \neq 0$, this immediately means that $G_k(u_n (2b_n - u_n)) = 0$ for all $u_n$ if $k \neq 0$, which, since $G_k$ is analytic, also means that $G_k = 0$ if $k \neq 0$. 
Using Eq.~(\ref{eq:Qdeltaexpansion}), we have that
\begin{equation}
    Q = G_0(u_n \bar u_n, \cdots)\;\;\;\implies\;\;\;Q = \mathcal{P}_n[Q], 
\end{equation}
where we have used the property of Eq.~(\ref{eq:Pnaction}).
This shows the momentum exchange symmetry of $Q$, completing the proof. 
\end{proof}

We now use Lemmas \ref{lem:a=0} and \ref{lem:collisions}, and Theorem \ref{thm:Pexchange} to show the following theorem.
\begin{theorem}
\label{thm:collisions}
   {When $a_n \neq 0$ for at least a single $n$, any $Q \in \mathbb{C}[\{z_i \bar z_j\}]$ that satisfies Eq.~(\ref{eq:collisionconserved}) is also symmetric under arbitrary permutations of $N$ momenta. Hence, it is functionally dependent on $E$ and $E_{\rm cm}$, i.e., $Q \in \mathbb{C}[E, E_{\rm cm}]$.} 
\end{theorem}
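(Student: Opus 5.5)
The plan is to propagate a single momentum-exchange symmetry along the chain of rods until it covers all of $S_N^{(p)}$, and then invoke Theorem~\ref{thm:Pexchange}. Fix an index $m$ with $a_m \neq 0$. Then $b_{m-1} \neq 0$ and $b_m \neq 0$, whenever those indices exist. Lemma~\ref{lem:collisions} then gives $Q = \mathcal{P}_{m-1}[Q]$ and $Q = \mathcal{P}_m[Q]$. For every other adjacent pair, one of two cases holds.

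\begin{itemize}
\item If $b_n \neq 0$, Lemma~\ref{lem:collisions} directly gives $Q = \mathcal{P}_n[Q]$.
\item If $b_n = 0$, i.e.\ $a_n = a_{n+1} = 0$, Lemma~\ref{lem:a=0} only gives $Q = \sigma_n[Q]$. Here $\sigma_n$ is the joint exchange of $(x_n,p_n)$ with $(x_{n+1},p_{n+1})$.
\end{itemize}

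So in general we have a mixture of momentum transpositions $\mathcal{P}_n$ and label transpositions $\sigma_n$. The goal is to show that they together generate all of $S_N^{(p)}$, acting on $Q$.

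The key step is an inductive extension. Suppose we already know that $Q$ is invariant under $\mathcal{P}_{n-1}$, and that $\sigma_n$ is a symmetry. Conjugation gives $\sigma_n \mathcal{P}_{n-1} \sigma_n$. As a permutation of the momenta, this swaps $p_{n-1}$ and $p_{n+1}$, and it acts trivially on positions, since $\sigma_n$ applied twice restores the positions. So $Q$ is also invariant under exchanging $p_{n-1}$ with $p_{n+1}$. Together with $\mathcal{P}_{n-1}$, this generates the transposition $p_n \leftrightarrow p_{n+1}$, i.e.\ $\mathcal{P}_n$:
\[
\mathcal{P}_n = \mathcal{P}_{n-1}\,(\sigma_n\mathcal{P}_{n-1}\sigma_n)\,\mathcal{P}_{n-1}.
\]
The same argument runs leftward, using $\mathcal{P}_{n+1}$ to obtain $\mathcal{P}_n$. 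Starting from the pair at $m$ and moving outward in both directions, every adjacent pair is handled, either directly by Lemma~\ref{lem:collisions} or by this conjugation trick. Hence $Q$ is invariant under all adjacent momentum transpositions $\mathcal{P}_1, \dots, \mathcal{P}_{N-1}$. These generate $S_N^{(p)}$, so Eq.~(\ref{eq:SMEDconserved}) holds.

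Since $Q \in \mathbb{C}[\{z_i\bar z_j\}]$ by Theorem~\ref{th:freemotion}, Theorem~\ref{thm:Pexchange} applies and yields $Q \in \mathbb{C}[E, E_{\rm cm}]$. Conversely, $E$ and $E_{\rm cm}$ are conserved, so the algebra of conserved quantities is exactly $\mathbb{C}[E,E_{\rm cm}]$.

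The main obstacle is the bookkeeping in the conjugation step. One must check that the symmetries of $Q$ compose as claimed: invariance under two operators acting on functions implies invariance under their composition, and $\sigma_n \mathcal{P}_{n-1} \sigma_n$ genuinely equals a pure momentum permutation, with no residual position permutation. This holds because the composite operators act on the arguments of $Q$ by linear substitutions, so the invariances form a group of substitutions. One must also be careful that Lemma~\ref{lem:a=0} was proved for balanced $Q$ in full generality, not just on the collision surface. The edge cases $m=1$ or $m=N$ need only a one-sided induction.
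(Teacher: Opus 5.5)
Your proof is correct. It reaches the same endpoint as the paper: invariance under all of $S_N^{(p)}$, followed by Theorem~\ref{thm:Pexchange}. The difference lies in how the symmetries are combined.

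The paper first upgrades each momentum transposition supplied by Lemma~\ref{lem:collisions} to a position transposition, using the $\theta=\pi/2$ rotation as in Eq.~(\ref{eq:SpSxinvariance}). As a result, every adjacent pair yields a joint label transposition, whether $b_n\neq0$ or $b_n=0$, so $Q$ is invariant under the whole of $S_N^{(x,p)}$. A single global conjugation $(\rho,\rho)(1,\tau)(\rho,\rho)^{-1}=(1,\rho\tau\rho^{-1})$ then produces every momentum and position transposition, which identifies the symmetry group as $S_N^{(x)}\times S_N^{(p)}$.

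You instead keep the mixed generating set $\{\mathcal{P}_n : b_n\neq 0\}\cup\{\sigma_n : b_n=0\}$. You propagate outward from a rod of nonzero length using the local identity $\mathcal{P}_n=\mathcal{P}_{n-1}(\sigma_n\mathcal{P}_{n-1}\sigma_n)\mathcal{P}_{n-1}$ and its mirror image. This is an exact identity of substitution operators: the word is a palindrome of involutions, so the contravariance of composing substitution operators is harmless.

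Your route uses only the literal conclusions of the two lemmas and needs no $U(1)$ input until Theorem~\ref{thm:Pexchange} is invoked. The cost is an induction along the chain. The paper's route is shorter, does not depend on the chain ordering, and exhibits the full invariance group explicitly.
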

\begin{proof}
{Given any two rods $n$ and $n+1$, there are just two possibilities of their lengths: (i) $a_n \neq 0$ or $a_{n+1} \neq 0$, or (ii) $a_n = a_{n+1} = 0$. 
In case (i), we can apply Lemma \ref{lem:collisions}, and obtain that $Q$ is invariant under momentum exchange of rods $n$ and $n+1$.
Similar to arguments below Eq.~(\ref{eq:U1transform}), one can show that this momentum exchange invariance along with the $U(1)$ invariance imposed by free motion, also implies invariance under position exchange of rods $n$ and $n+1$. 
Naturally this also implies invariance under the simultaneous exchange of momenta and positions of those rods, which is a smaller group, and we will exploit this in the proof below. 
In case (ii), we can apply Lemma \ref{lem:a=0}, and obtain that $Q$ is invariant only under the simultaneous exchange of their momenta and positions of the rods $n$ and $n+1$, not necessarily under the individual exchanges of momenta or the positions. 
Hence, applying to all pairs of rods, we obtain that $Q$ should at least be invariant under joint permutation group of positions and momenta, which we {earlier} denote{d} as the group $S^{(x,p)}_N$.
To make the notation clear, we are considering the following four groups, which we will denote by the operations
\begin{gather}
    S^{(x)}_N \times S^{(p)}_N = \{(\sigma, \tau),\;\;\;\sigma,\tau \in S_N\}\nn \\
    S_N^{(x)} = \{(\sigma, 1),\;\;\sigma \in S_N\},\;\;\;S^{(p)}_N = \{(1, \sigma),\;\;\sigma \in S_N\},\;\;\;S^{(x,p)}_N = \{(\sigma, \sigma),\;\;\;\sigma \in S_N\}. 
\label{eq:groupops}
\end{gather}
Note that $S_N^{(x)}$, $S_N^{(p)}$, and $S_N^{(x,p)}$ are all distinct $S_N$ subgroups of the larger group $S_N^{(x)} \times S^{(p)}_N$. 
If all the rod lengths are non-zero, we immediately have that $Q$ is invariant under the full group $S^{(p)}_N$, which, along with $U(1)$ invariance of free motion, by Theorem \ref{thm:Pexchange} implies that $Q \in \mathbb{C}[E, E_{\rm cm}]$. 
On the other hand, if all the rod lengths are zero, we only have that $Q$ is invariant under $S_N^{(x,p)}$, and by Theorem \ref{thm:a=0} implies that there is a larger set of conserved quantities independent of $E$ and $E_{\rm cm}$. 
Here we are interested in the case where at least a single $a_n \neq 0$. 
In this case, in addition to invariance under $S^{(x,p)}_N$, it should be invariant under at least a single transposition $(1, \tau) \in S^{(p)}_N$. 
As a consequence of $U(1)$ conservation and {Eq.~(\ref{eq:SpSxinvariance})}, it should also be invariant under the transposition $(\tau, 1) \in S^{(x)}_N$. 
Hence the full group that $Q$ is invariant under is 
\begin{equation}
    G = \langle (1, \tau), (\tau, 1), S^{(x,p)}_N \rangle, 
\end{equation}
where $\langle \cdots \rangle$ denotes the group generated by elements $\cdots$. 
We then have that
\begin{equation}
    (\rho, \rho)(1, \tau)(\rho, \rho)^{-1} = (1, \rho \tau \rho^{-1}) \in G\;\;\;\text{and}\;\;\;(\rho, \rho)(\tau, 1)(\rho, \rho)^{-1} = (\rho \tau \rho^{-1}, 1) \in G\;\;\;\forall\;\;\rho \in S_N.  
\end{equation}
Since $\{\rho \tau \rho^{-1}\}$ includes all possible transpositions, which in turn generate all permutations in $S_N$, we obtain that $(1, \sigma) \in G$ and $(\sigma, 1) \in G$ for all $\sigma \in S_N$, hence $G = S^{(x)}_N \times S^{(p)}_N$. 
This means that $Q$ should then be a conserved quantity of the SMED dynamics studied in Sec.~\ref{sec:SMED}, and then we can apply Theorem \ref{thm:Pexchange} to conclude that $Q \in \mathbb{C}[E, E_{\rm cm}]$.
This rules out the existence of additional independent conserved quantities in systems when at least one rod has non-zero length, which includes the case when all rods have equal non-zero lengths, such as those studied in Ref.~\cite{bagchi}.}

\end{proof}

\section{Discussion and Outlook}
\label{sec:discussion}
{In this work we investigated the question of whether the anomalous dynamical behavior observed in the system of harmonically confined hard-rods originates from a hidden conservation law.
By showing that any analytic conserved quantity in the phase space variables $\{x_i\}$ and $\{p_i\}$ should be invariant under a combination of a continuous $U(1)$ symmetry and a momentum permutation symmetry $S_N^{(p)}$, we showed that the only independent conserved quantities are the total energy $E$ and the center-of-mass energy $E_{\rm cm}$ whenever the system has at least a single rod of non-zero length. 
This rules out the existence of an independent hidden {analytic} conserved quantity {in that system}.
{Along the way, we obtained exhaustive statements on conserved quantities in a variety of related systems, such as non-interacting rods, point particles, and Stochastic Momentum Exchange Dynamics, results of which are summarized in Tab.~\ref{tab:symmetry-summary}.}
{It would be interesting in the future to extend these results to also explore or rule out potentially non-analytic conserved quantities.}
{In many ways, this absence of a hidden analytic conserved quantity deepens the mystery of harmonically confined rods, and below we comment on various aspects of this puzzle and potential future directions. }
\subsection{{Poincare sections and Lyapunov exponents}}
\begin{figure}[hbt!]
\centering

\begin{subfigure}{.35\textwidth}
\centering
\includegraphics[width=1.0\linewidth]{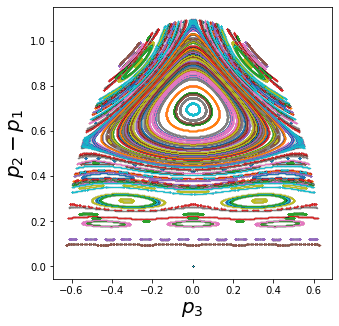}
\caption{}
\end{subfigure}%
\begin{subfigure}{.35\textwidth}
\centering
\includegraphics[width=1.0\linewidth]{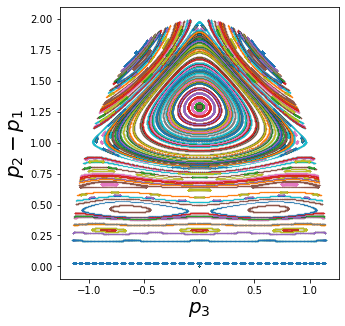}
\caption{}
\end{subfigure}%
\begin{subfigure}{.35\textwidth}
\centering
\includegraphics[width=1.0\linewidth]{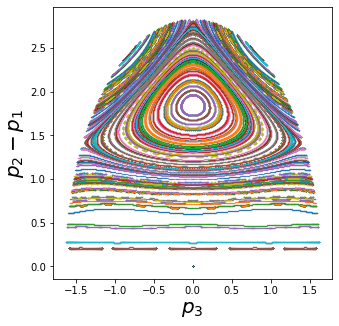}
\caption{}
\end{subfigure}
\caption{{\bf Poincar\'e section for equal length hard rods}: (a) $E=2.6$, (b) $E=4$, (c) $E=6$. We have taken $100$ different initial conditions used $E_{\rm cm} =0$ with rod lengths $a = 1$ in each of the plots.
\label{poincareequallength}}
\end{figure}
{We should however remark} this result is not completely unexpected, and there have been earlier numerical results in the system of equal length rods that suggest this absence.
Ref.~\cite{bulchandani} commented that the fractal structure in the Poincaré section of such systems{, also reproduced in Fig.~\ref{poincareequallength},} is not compatible with the existence of a hidden analytic conserved quantity.
Further, even though numerous regular orbits with zero Lyapunov exponents have been observed for this system~\cite{bagchi} {for $N = 3$}, which initially {was one of the} motivations for the search for a hidden conserved quantity, a closer look does suggest the presence of rare chaotic trajectories showing scattered Poincaré section {even in that case}.
{Concretely, the Poincar\'e section for three rods is constructed by fixing the center-of-mass energy to $E_{\rm cm}=0$, which is equivalent to imposing $\sum_{i=1}^3 x_i = 0$ and $\sum_{i=1}^3 p_i = 0$, together with the conservation of the total energy $E$.}
One then records two independent linear combinations of the phase-space variables $\{x_1,x_2,x_3,p_1,p_2,p_3\}$ each time the trajectory crosses the collision surface $x_2-x_1=a$, i.e., when rods $1$ and $2$ collide.
Since these conditions {altogether} impose four constraints on the six-dimensional phase space, the resulting Poincar\'e section is two-dimensional.
{The full Poincare section for the equal length rod case is} reproduced in {Fig.~\ref{poincareequallength} for various energies, and} Fig.~\ref{chaotictrajectory}(a){shows it} for a single special initial condition, which shows scattered points upon zooming in Fig.~\ref{chaotictrajectory}(b).
The same initial condition {also} has a nonzero Lyapunov exponent as shown in Fig.~\ref{chaotictrajectory}(c).
The Lyapunov exponents are computed using the method described in \cite{benettin}. 
We begin with two initial conditions separated by a distance $\epsilon$ and evolve both for a time $\tau$, after which the distance becomes $d_1$.
The second trajectory is then shifted along the line of separation to restore the distance to $\epsilon$, and this process is repeated for $M$ steps, producing the sequence $d_1,d_2,\ldots, d_M$.
The Lyapunov exponent $\lambda(t)$ is defined as:
\begin{equation}
    \lambda(t)=\frac{1}{\lfloor\frac{t}{\tau}\rfloor}\sum_{i=1}^{\lfloor\frac{t}{\tau}\rfloor}\frac{1}{\tau}\ln{\frac{d_i}{\epsilon}}.
\end{equation}
where $\lfloor x \rfloor$ is the greatest integer less than or equal to $x$.
For the chaotic initial condition, $\lambda(t)$ vs $t$ is plotted in Fig.~\ref{chaotictrajectory}(c) for different values of $\epsilon$, and the long time values of $\lambda(t)$ can be seen to converge to a nonzero value as $\epsilon$ goes to zero.
In contrast, for a randomly chosen initial condition {with $N = 3$}, we see that $\lambda(t)$ is approximately zero.
{Note that this phenomenology changes when} $N>3$, where Lyapunov exponents {were found to be} non-zero for most of the initial conditions in~\cite{bagchi}.
{Nevertheless,} the system {there too} does not thermalize {as expected} and shows non-ergodicity.
\begin{figure}[t!]
\centering
\begin{subfigure}{.35\textwidth}
\centering
\includegraphics[width=1.0\linewidth]{{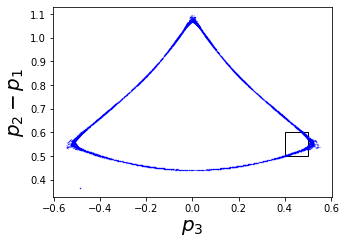}}
\caption{}
\end{subfigure}%
\begin{subfigure}{.35\textwidth}
\centering
\includegraphics[width=1.0\linewidth]{{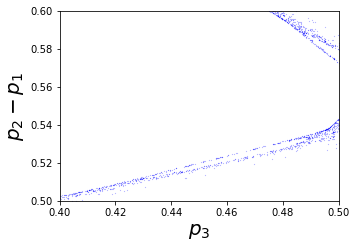}}
\caption{}
\end{subfigure}%
\begin{subfigure}{.35\textwidth}
\centering
\includegraphics[width=1.0\linewidth]{{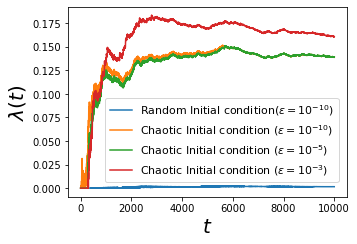}}
\caption{}
\end{subfigure}
\caption{{\bf Chaotic trajectory}: (a) Poincare section for a single initial condition $(x_1, x_2, x_3, p_1, p_2, p_3) \approx (-1.002, 0.001, 1.001, 0, -0.545, 0.545)$
showing scattered points, {as evident} when zoomed-in {on the square region as shown} in (b).
This is shown for a system of three rods of equal length $a_1 = a_2 = a_3 = 1$, and the system has total energy $E=2.6$, and {center-of-mass energy} $E_{\rm cm}=0$. (c) Time dependent Lyapunov exponent $\lambda(t)$ for the chaotic trajectory in panels (a,b) using perturbations $\epsilon=10^{-10}$ (orange), $10^{-5}$(green), $10^{-3}$(red), together with a randomly selected initial condition with $\epsilon=10^{-10}$(blue) ($\tau=1$ for all four cases). The random trajectory remains near zero, whereas the chaotic trajectory approaches a nonzero value.\label{chaotictrajectory}}
\end{figure}
{This $N = 3$ behavior as seen in Fig.~\ref{poincareequallength} might} suggest that the phase space {is} composed of regular regions {even in the absence of additional analytic conserved quantities}, {making it} reminiscent of the behavior seen in weakly perturbed integrable systems, such as the Chirikov standard map~\cite{CHIRIKOV1979263}.
In that setting, the KAM theorem~\cite{Arnold2009} implies that many invariant tori---and the associated action variables---persist for sufficiently small perturbations on a set of large measure, although not as a complete global set of exact conserved quantities across the whole phase space.
By analogy, one may wonder whether the hard-rod system admits similarly ``dressed'' or quasi-conserved quantities that are only well defined on certain regions of phase space rather than globally, {and whether such quantities can arise as deformations} of conserved quantities in one of the two natural integrable limits --{one} when the harmonic potential is completely turned off, {and another} the rod lengths are all set to zero.
{It would be interesting to explore this possibility in future work, and whether such approximate integrals of motion can also explain the observed behaviors for $N > 3$~\cite{bagchi}.}
\subsection{{Speciality of equal length rods}}
An additional question concerns the apparent dynamical role of the rod lengths.
Our analytical result treats all systems with at least one non-zero rod on the same footing: whether the non-zero lengths are equal or unequal, the collision constraints rule out any independent globally analytic conserved quantity beyond $E$ and $E_{\rm cm}$.
Thus, equal rod lengths are not exceptional at the level of exact analytic conservation laws.
Nevertheless, numerical results {do} suggest that the equal-length case {are} dynamically distinct, in the sense that it appears to support a comparatively large prevalence of regular trajectories, whereas unequal {length rods appear to} produce more visibly scattered sections in some parameter regimes.
The Poincar\'e section for equal-length rods where all the length of all the rods are unity is shown in Fig.~\ref{poincareequallength} for various initial conditions for different values of energy.
{We then also explore the} Poincar\'e sections for systems of rods with unequal length {rods}, which has not been done in previous works.
For $3$ rods of unequal lengths $a_1,a_2,a_3$, there are only two lengths which enter the dynamics viz. $b_1 = \frac{a_1 + a_2}{2}$ (during collisions of rods $1$ and $2$) and $b_2 = \frac{a_2 + a_3}{2}$ (during collisions of rods $2$ and $3$).
The corresponding Poincar\'e sections are shown in Fig.~\ref{poincareunequallength}.
We see that some values of the parameters $b_1,b_2,E$ are evidently less regular with lots of scattered points whereas some seem as regular as the equal rod case.
In particular, we observe that the prevalance of regular orbits increase when the ratio $b_1/b_2$ is close to unity, which is the ratio in the case of equal length rods.
This suggests that the equal length rods case is special {in their dynamics, and it might be worthwhile in the future to nail down the precise distinction of this case further and develop proofs that are not agnostic to the rod lengths.
It would also be an interesting numerical exercise to check if unequal length rods thermalize as expected for $N > 3$.
} 
\begin{figure}[t]
\centering
\begin{subfigure}{.35\textwidth}
\centering
\includegraphics[width=1.0\linewidth]{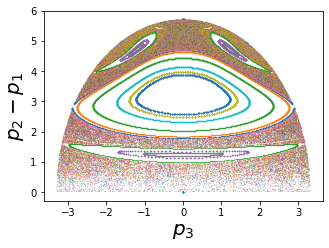}
\caption{}
\end{subfigure}%
\begin{subfigure}{.35\textwidth}
\centering
\includegraphics[width=1.0\linewidth]{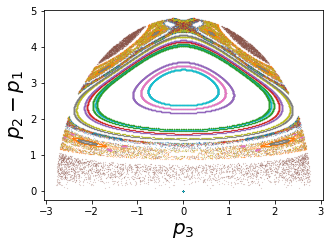}
\caption{}
\end{subfigure}%
\begin{subfigure}{.35\textwidth}
\centering
\includegraphics[width=1.0\linewidth]{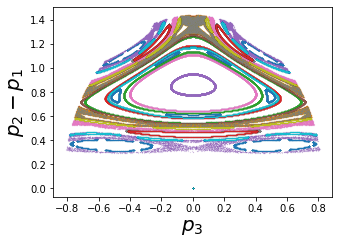}
\caption{}
\end{subfigure}
\caption{{\bf Poincar\'e section for unequal length hard rods}: (a) $b_1=1$, $b_2=3.2$, $E=13$, $E_{\rm cm} = 0$ (b) $b_1=0.5$, $b_2=1.6$, $E=7$, $E_{\rm cm} = 0$, (c) $b_1=1$, $b_2=1.5$, $E=2.1$, $E_{\rm cm} = 0$. {The regularity of these Poincar\'e sections appears to increase when the ratios of the rod lengths approaches unity.}\label{poincareunequallength}}
\end{figure}
\subsection{{Beyond harmonically confined rods}}
Finally, a natural step is to apply the similar algebraic methods to {other classical many-body} systems {that show regular regions without any integrals of motion.}
{A recent set of examples are systems} with dipole or higher multipole {moment} conservation, including classical fractons and phase-space fractons.
Ergodicity breaking {there is said to} arise from kinematic constraints rather than hidden exact integrals {of motion}~\cite{ PhysRevB.109.054313,PhysRevB.110.024305,b974-mpkc}, {and it would be interesting to systematically obtain or rule out analytic conserved quantities}.
More generally, our work {motivates the development of a systematic framework for an algebraic understanding of conserved quantities in classical many-body systems, analogous to systematic constructions such as commutant algebras that exist for quantum many-body systems~\cite{PhysRevX.12.011050, PhysRevX.14.041069, PhysRevB.107.224312, moudgalya2022from}.
Such a framework, which might potentially exist in the literature in some other form, can be used as a powerful tool to constrain the kinds of conserved quantities that can occur in classical many-body systems, and might also unravel novel kinds of conserved quantities that lead to novel phenomena.}
\section*{Acknowledgements}
We acknowledge the use of ChatGPT-5.5 for the generation of many of the proof ideas presented here, which were {then} independently verified, simplified, and heavily streamlined} by the authors.
S.M. acknowledges support from the Munich Center for Quantum Science and Technology (MCQST), which is supported by the Deutsche Forschungsgemeinschaft (DFG, German Research Foundation) under Germany’s Excellence Strategy--EXC--2111--390814868. AD acknowledges the J.C. Bose
Fellowship (JCB/2022/000014) of the Science and Engineering Research Board of the Department of Science and Technology, Government of India and the support of the DAE, Government of India, under projects nos. 12-R\&D-TFR-5.10-1100 and RTI4001.

\bibliographystyle{unsrturl}
\bibliography{references}
\end{document}